\newcommand{\true}{{\tt T}}
\newcommand{\both}{{\tt B}}
\newcommand{\none}{{\tt N}}
\newcommand{\false}{{\tt F}}
\newcommand{\ND}{\mathfrak{ND}}
\newcommand{\Sfde}{{\bf S_{fde}}}
\newcommand{\Srfde}{{\bf S_{fde}^\rightarrow}}
\newcommand{\Slfde}{{\bf S_{fde}^\leftarrow}}
\newcommand{\dSfde}{{\bf dS_{fde}}}
\newcommand{\dSrfde}{{\bf dS_{fde}^\rightarrow}}
\newcommand{\dSlfde}{{\bf dS_{fde}^\leftarrow}}
\theoremstyle{definition}
\newtheorem{defn}{\sc Definition}[section]
\newtheorem{theorem}{\sc Theorem}[section]
\newtheorem{lemma}{\sc Lemma}[section]
\newtheorem{proposition}{\sc Proposition}[section]
\title{Normalisation for Some Infectious Logics and Their Relatives}
\author{Yaroslav Petrukhin
\institute{University of \L{}\'{o}d\'{z}\\ \L{}\'{o}d\'{z}, Poland}
\email{yaroslav.petrukhin@mail.ru}}
\begin{document}
\maketitle

\begin{abstract}
We consider certain infectious logics ($ \Sfde $, $ \dSfde $, $ \bf K_3^w$, and \textbf{PWK}) and several their non-infectious modifications, including two new logics, reformulate previously constructed natural deduction systems for them (or present such systems from scratch for the case of new logics) in way such that the proof of normalisation theorem becomes possible for these logics. We present such a proof and establish the negation subformula property for the logics in question.\\
\textit{Keywords}: natural deduction, normalisation, infectious logic, four-valued logic, three-valued logic.
\end{abstract}

\section{Introduction}
Although the term `infectious logic' is relatively new \cite{Ferguson15}, the first representative of this direction in logic is the weak Kleene logic $ {\bf K_3^w} $ \cite{Kleene38} which is a fragment of Bochvar's logic $ \bf B_3$ \cite{Bochvar}. Kleene's motivation for the introduction of his logic was connected with the recursion theory and ordinal numbers, while  Bochvar's  motivation was the development of the logical instrument for the analysis of semantic paradoxes, mainly Russell's paradox. We may say that infectious logic is a part of a wider field of logic called nonsense logic started from Bochvar's paper \cite{Bochvar}, continued by Halld\'{e}n's monograph \cite{Hallden} (where the logic \textbf{PWK} was introduced, i.e. $ {\bf K_3^w} $ with two designated values) and papers by various authors such as \AA{}qvist \cite{Aqvist}, Ebbinghaus \cite{Ebbinghaus}, Finn and Grigolia \cite{FinnGrigolia}, Ha\l{}kowska \cite{Halkowska}. It is not the case that each nonsense logic is an infectious logic (while every infectious logic is a nonsense one), e.g. Ha\l{}kowska's nonsense logic \textbf{Z} is not an infectious logic. We say that a logic is infectious, if it has an infectious value, i.e. a value such that if one of compounds of a formula is evaluated by it, then the whole formula is evaluated by it as well. It is not the case that all nonsense logics have such a value. This is not their drawback, but in the recent literature there is a special interest for the logics which have an infectious value. Let us mention some works in this field. Szmuc \cite{Szmuc} studied the connection of infectious logics with logics of formal inconsistency and 
undeterminedness, Ciuni, Szmuc, and Ferguson \cite{CiuniSzmucFerguson} explored the connection of infectious logics with relevant ones. Proof-theoretical investigation (mainly based on sequent calculi) of infectious logics has been carried out by various authors in \cite{Szmuc,CiuniSzmucFerguson,CiuniSzmucFerguson19,BelikovPetrukhin,Belikov,ConiglioCorbalan,Szmuc21,Fjellstad,Petrukhin3,Petrukhin4,Petrukhin}. Algebraic treatment of infectious logics is presented, e.g. in \cite{Bonzioetla,Baldi}. Epistemic interpretation of infectious logics is developed in \cite{Szmuc19}. Theories of truth based on infectious logics are formulated in \cite{DaReetla}. 
 For more references about infectious logics, see, e.g. \cite{BelikovPetrukhin,Belikov}. 

As was said in the abstract, we are going to consider $ \Sfde $, $ \dSfde $, $ \bf K_3^w$, \textbf{PWK}, and some other logics. As for  $ \bf K_3^w$ and \textbf{PWK}, we said above that they were introduced in \cite{Kleene38,Bochvar} and \cite{Hallden}, respectively. What about $ \Sfde $ and $ \dSfde $? $ \Sfde $ is Deutsch's logic \cite{Deutsch77} and the motivation of its investigation is connected with relevant logic. Later on it was independently discovered by Fitting \cite{Fitting94} in the context of the study of four-valued generalizations of Kleene's three-valued logics, bilattices, and their computer science applications, and by Oller \cite{Oller} in the context of examination of paraconsistency and analyticity. $ \dSfde $ was introduced by Szmuc \cite{Szmuc} during the investigation of the connection between infectious logics and logics of formal inconsistency and undeterminedness. 

Natural deduction (ND for short) systems for $ \bf K_3^w$ and \textbf{PWK} are offered in \cite{Petrukhin3} (and later in \cite{Belikov}), for $ \Sfde $ in \cite{Petrukhin} (where this logic is called $ \bf FDE^\leftrightarrow$; and then in \cite{Belikov}), and for $ \dSfde $ in \cite{Belikov}.
In these papers, soundness and completeness theorems  are proven, but the issue of normalisation has not been considered.
As we show in this paper, for ND systems from  \cite{Petrukhin,Petrukhin3} after some minor changes of their rules in the spirit of the paper \cite{KurbisPetrukhin} normalisation and the negation subformula property can be established, while some of the rules of the systems from \cite{Belikov} destroy any meaningful subformula property, block the proof of normalisation, and are neither introduction, nor elimination rules. 

Since in \cite{Petrukhin} not only $ \Sfde $ was formalized via an ND system, but two more logics, Fitting's $ \bf FDE^\rightarrow$ \cite{Fitting94} and the logic $ \bf FDE^\leftarrow$ introduced in \cite{Petrukhin}, which for the unification of notation we will call here $ \Srfde $ and $ \Slfde $, respectively, we show that our methods work for them as well, i.e. ND systems for them (after minor changes of the rules) are normalisable and have the negation subformula property. Fitting's $ \Srfde $ was motivated by computer science problems. As for $ \Slfde $, it appeared in the context of exploration of four-valued generalization of Kleene's three-valued logics.  Taking our inspiration from the logic $ \dSfde $, we introduce two new logics, $ \dSrfde $ and $ \dSlfde $, respectively, which are in the same relations with $ \Srfde $ and $ \Slfde $ as $ \dSfde $ with $ \Sfde $. ND systems (with normalisation and the negation subformula property) for $ \dSrfde $ and $ \dSlfde $ are introduced.

In \cite{Belikov} it is emphasized that $ \bf K_3^w$ and \textbf{PWK} can be formalized as extensions of $ \Sfde $ and $ \dSfde $, respectively, by ex contradictione quodlibet and the law of excluded middle. 
  We demonstrate that just like $ \bf K_3^w$ extends $ \Sfde $ by ex contradictione quodlibet, McCarthy's $ \bf K_3^\rightarrow$ \cite{McCarthy} (independently reopened by Fitting \cite{Fitting94}) and Komendantskaya's \cite{Komendantskaya} $ \bf K_3^\leftarrow$ extend $ \Srfde $ and $ \Slfde$, respectively, by ex contradictione quodlibet; just like \textbf{PWK} extends $ \dSfde $ by the law of excluded middle, the logics  $ \bf K_3^{\rightarrow 2}$ and $ \bf K_3^{\leftarrow 2}$ introduced in \cite{Petrukhin3} extend  $ \dSrfde $ and $ \dSlfde$, respectively, by excluded middle. In fact, ND systems for $ \bf K_3^\rightarrow$, $ \bf K_3^\leftarrow$, $ \bf K_3^{\rightarrow 2}$, and $ \bf K_3^{\leftarrow 2}$ were first formulated in \cite{Petrukhin3}, but without connection with $ \Srfde $, $ \Slfde$, $ \dSrfde $, and $ \dSlfde$, and normalisation was not proven for them there. Here we fill this gap. Notice that in \cite{Kleene38} Kleene introduced the notion of a regular logic. As follows from \cite{Komendantskaya}, in the case of three-valued logics with one designated value  $ \bf K_3$,  $ \bf K_3^\rightarrow$, $ \bf K_3^\leftarrow$, and $ \bf K_3^w$ are all regular logics. In the case of three-valued logics with two designated values, \textbf{LP},  $ \bf K_3^{\rightarrow 2}$, $ \bf K_3^{\leftarrow 2}$, and $ \bf PWK$ are all regular logics.
  


The structure of the paper is as follows.  Section \ref{2} describes semantics for $ \Sfde $ and $ \dSfde $ and formally explains what is infectious logic. ND systems for $ \Sfde $ from \cite{Petrukhin} and \cite{Belikov} are compared. A modification of the system for $ \Sfde $ from \cite{Petrukhin} as well as a new ND system for $ \dSfde $ are presented. Normalisation for these new systems is proven and the negation subformula property is established. 
 Section \ref{3} is devoted to  the consideration of $ \Srfde $ and $ \Slfde $ as well as $ \dSrfde $ and $ \dSlfde $. Section \ref{5} contains completeness proof for ND systems for $ \dSfde $, $ \dSrfde $, and $ \dSlfde $. Section \ref{4} is devoted to the study of the logics $ \bf K_3^\rightarrow$, $ \bf K_3^\leftarrow$, $ \bf K_3^w$, $ \bf K_3^{\rightarrow 2}$, $ \bf K_3^{\leftarrow 2}$,  and \textbf{PWK}. Section \ref{6} contains concluding remarks.

\section{Semantics, natural deduction and normalisation for $ \Sfde $ and $ \dSfde $}\label{2}
\begin{defn}
	A logical matrix $ \langle \mathscr{V},\mathscr{C},\mathscr{D}\rangle $ (where $ \mathscr{V} $ is the set of truth values, $ \mathscr{C} $ is the set of connectives, $ \mathscr{D} $ is the set of designated values) has an infectious value $ i $ iff for each $ \circ\in \mathscr{C}$ and each $ \{x_1,\ldots,x_n\}\subseteq \mathscr{V}$ (where $ n>0 $) it holds that if $ i\in\{x_1,\ldots,x_n\} $, then
	$ \circ(x_1,\ldots,x_n)=i $.
\end{defn}
\begin{defn}
	A logic having a logical matrix with an infectious value is said to be infectious.
\end{defn}

Let us consider the standard propositional language with the connectives $ \neg $, $ \wedge $, and $ \vee $. The notion of a formula is defined in a standard way. 
Let $ \mathscr{V}=\{\true,\both,\none,\false\} $, where the values are understood in the Belnapian way \cite{Belnap1}: `true', `both true and false', `neither true, nor false', and `false'. 
Consider the following matrices presented below. We can see here a negation which is common for both $ \Sfde $ and $ \dSfde $, conjunction and disjunction for $ \Sfde $ (at the left) and conjunction and disjunction for $ \dSfde $ (at the right). We can observe that in $ \Sfde $ $ \none $ is an infectious value, while in $ \dSfde $ $ \both $ is an infectious value.
\begin{center}
	\begin{tabular}{|c|c|}
		\hline $ A $ & $ {\neg} $ \\ 
		\hline $ \true $ & $ \false $ \\ 
		\hline $ \both $ & $ \both $ \\ 
		\hline $ \none $ & $ \none $ \\ 
		\hline $ \false $ & $ \true $ \\ 
		\hline 
	\end{tabular} 
	\begin{tabular}{|c|cccc|}\hline
		$ \wedge $ & $\true$ & $\both$& $\none$& $\false$ \\\hline
		$\true$ & $\true$ & $\both$& $\none$& $\false$ \\\hline
		$\both$& $\both$& $\both$& $\none$& $\false$ \\\hline
		$\none$& $\none$& $\none$& $\none$& $\none$\\\hline
		$\false$ & $\false$ & $\false$ & $\none$& $\false$ \\\hline
	\end{tabular}
	\begin{tabular}{|c|cccc|}\hline
		$ \vee $ & $\true$ & $\both$ & $\none$ & $\false$ \\\hline
		$\true$ & $\true$ & $\true$ & $\none$ & $\true$ \\\hline
		$\both$ & $\true$ & $\both$ & $\none$ & $\both$ \\\hline
		$\none$ & $\none$ & $\none$ & $\none$ & $\none$\\\hline
		$\false$ & $\true$ & $\both$ & $\none$ & $\false$ \\\hline
	\end{tabular}
	\begin{tabular}{|c|cccc|}\hline
		$ \wedge $ & $\true$ & $\both$& $\none$& $\false$ \\\hline
		$\true$ & $\true$ & $\both$& $\none$& $\false$ \\\hline
		$\both$& $\both$& $\both$& $\both$& $\both$ \\\hline
		$\none$& $\none$& $\both$& $\none$& $\false$\\\hline
		$\false$ & $\false$ & $\both$ & $\false$& $\false$ \\\hline
	\end{tabular}
	\begin{tabular}{|c|cccc|}\hline
		$ \vee $ & $\true$ & $\both$ & $\none$ & $\false$ \\\hline
		$\true$ & $\true$ & $\both$ & $\true$ & $\true$ \\\hline
		$\both$ & $\both$ & $\both$ & $\both$ & $\both$ \\\hline
		$\none$ & $\true$ & $\both$ & $\none$ & $\none$\\\hline
		$\false$ & $\true$ & $\both$ & $\none$ & $\false$ \\\hline
	\end{tabular}  	  
\end{center}

The entailment relation is defined as follows. 
Let $ \bf L\in\{\Sfde,\dSfde\}$. Then:
\begin{center}
	$ \Gamma\models_{\bf L}\Delta $ iff $ v(A)\in \{\true,\both\}$, for each $ A\in\Gamma $, implies $ v(B)\in \{\true,\both\}$, for some $ B\in\Delta $, for each valuation $ v $.	
\end{center}

Let us present the ND system $ \ND_{\Sfde}^P $ for $ \Sfde $ from \cite{Petrukhin}. It has the following rules:

\begin{center}
	($\wedge I$) $ \dfrac{A\quad B}{A \wedge B} $ \quad
	($\wedge E_{1} $) $ \dfrac{A \wedge B}{A} $ \quad
	($\wedge E_{2} $) $ \dfrac{A \wedge B}{B} $ \quad
	($\overleftrightarrow{\vee I_{1}}$) $ \dfrac{A\wedge \neg B}{A \vee B} $ \quad
	($\overleftrightarrow{\vee I_{2}}$) $ \dfrac{\neg A\wedge B}{A \vee B} $ \quad
	($\overleftrightarrow{\vee I_{3}}$) $ \dfrac{A\wedge B}{A \vee B} $ \quad
\end{center}
\begin{center}
	($\overleftrightarrow{\vee E}$) $ \dfrac{\begin{matrix}
		& [A\wedge \neg B] & [\neg A\wedge B] & [A\wedge B]\\
		A \vee B &  C &  C & C
		\end{matrix}}{C} $\quad
	($\neg \neg I$) $ \dfrac{A}{\neg \neg A} $ \quad
	($\neg \neg E$) $ \dfrac{\neg \neg A}{A} $ \quad
	\end{center}
	\begin{center}
	($\neg \vee I$) $ \dfrac{\neg A\wedge \neg B}{\neg (A \vee B)} $ \quad
	($\neg \vee E$) $ \dfrac{\neg (A \vee B)}{\neg A\wedge \neg B} $ \quad
	($\neg \wedge I$) $ \dfrac{\neg A\vee \neg B}{\neg (A \wedge B)} $ \quad
	($\neg \wedge E$) $ \dfrac{\neg (A \wedge B)}{\neg A\vee \neg B} $
\end{center}
The notion of a deduction is defined in a standard Gentzen-Prawitz-style way. 	
If we replace ($\overleftrightarrow{\vee I_{1}}$), ($\overleftrightarrow{\vee I_{2}}$), ($\overleftrightarrow{\vee I_{3}}$), and ($\overleftrightarrow{\vee E}$) with the standard rules ($\vee I_{1} $),  ($\vee I_{2} $), and ($\vee E $), we get Priest's \cite{Priest02} natural deduction system $ \ND_{\bf FDE} $ for Belnap-Dunn's \cite{Belnap1,Dunn} \textbf{FDE}. 
\begin{center}
	($\vee I_{1} $) $ \dfrac{A}{A \vee B} $ \qquad
	($\vee I_{2} $) $ \dfrac{B}{A \vee B} $ \qquad
	($\vee E$) $ \dfrac{\begin{matrix}
		& [A] & [B] &\\
		A \vee B &  C &  C
		\end{matrix}}{C} $\quad
\end{center}	

Now let us present Belikov's ND system $ \ND_{\Sfde}^B $ for $ \Sfde $ from \cite{Belikov}. It is obtained from $ \ND_{\Sfde}^P $ by the replacement of the rule ($\overleftrightarrow{\vee E}$)  with ($\vee E$), $ (LEM_1) $, and $ (\vee C) $.

\begin{center}
	$ (LEM_1) $ 	$ \dfrac{A\vee B}{A\vee\neg A} $\qquad
	$ (\vee C) $ 	$ \dfrac{A\vee B}{B\vee A} $\quad
\end{center} 	

Belikov \cite{Belikov} writes that his system is better than $ \ND_{\Sfde}^P $, mainly because it has a standard disjunction elimination rule instead of ($\overleftrightarrow{\vee E}$). We think that it is rather a debatable question, 
 because normalisation and subformula property are at the first place  
 among the criteria of a good ND system (see, e.g. \cite{StructuralProofTheory}). Both in \cite{Petrukhin} and \cite{Belikov} this issue was not considered, but the system from \cite{Petrukhin} after minor modifications is normalisable and has the negation subformula property, while the system from \cite{Belikov} has not. The problem is with the rule $ (\vee C) $. First of all, it is out from Gentzen's classification of rules: it is neither introduction, nor elimination rule. Second, it destroys any meaningful subformula property: it is possible to find a deduction such that $ A\vee B$ is the subformula of the conclusion or of assumptions of this deduction, while $ B\vee A$ does not (the simplest example of such deduction is $ A\vee B \vdash B\vee\neg B$; at that if we add this principle as a new rule to Belikov's system, it seems that $ (\vee C) $ does not become derivable). Third, this rule makes the proof search more complicated: it is not clear when it should be applied. The rule $ (LEM_1) $ is also out of Gentzen's classification. 
Let us present a modification of the system from \cite{Petrukhin} which we call $ \ND_{\Sfde}^\prime $. It has the rules ($\wedge I$), ($\wedge E_{1} $), ($\wedge E_{2} $), ($\neg \neg I$), ($\neg \neg E$), and the subsequent ones:

\begin{center}
	($\overleftrightarrow{\vee E^\prime}$) $ \dfrac{\begin{matrix}
		& [A, \neg B] & [\neg A, B] & [A, B]\\
		A \vee B &  C &  C & C
		\end{matrix}}{C} $\quad
	  	($\overleftrightarrow{\neg \wedge E^\prime}$)$ \dfrac{\begin{matrix}
	  		& [\neg A, B] & [A,\neg B] & [\neg A,\neg B]\\
	  		\neg(A \wedge B) &  C &  C & C
	  		\end{matrix}}{C} $\quad		
\end{center}
\begin{center}
		($\overleftrightarrow{\vee I_{1}^\prime}$) $ \dfrac{A\quad \neg B}{A \vee B} $ \quad
		($\overleftrightarrow{\vee I_{2}^\prime}$) $ \dfrac{\neg A\quad B}{A \vee B} $ \quad
		($\overleftrightarrow{\vee I_{3}^\prime}$) $ \dfrac{A\quad B}{A \vee B} $ \quad
 	($\neg \vee I^\prime$) $ \dfrac{\neg A\quad \neg B}{\neg (A \vee B)} $ \quad
($\neg \vee E^\prime_1$) $ \dfrac{\neg (A \vee B)}{\neg A} $ \quad
\end{center}
\begin{center}
		($\neg \vee E^\prime_2$) $ \dfrac{\neg (A \vee B)}{\neg B} $ 	\quad
($\overleftrightarrow{\neg \wedge I^\prime_1}$) $ \dfrac{\neg A\quad B}{\neg (A \wedge B)}$\quad
		($\overleftrightarrow{\neg \wedge I^\prime_2}$) $ \dfrac{A\quad\neg B}{\neg (A \wedge B)}$ \quad
		($\overleftrightarrow{\neg \wedge I^\prime_3}$) $ \dfrac{\neg A \quad\neg B}{\neg (A \wedge B)}$	
\end{center}

Let us compare this system with $ \ND_{\Sfde}^P $. 	
We deleted conjunctions in the premises of disjunction introduction rules and in the assumptions of the disjunction elimination rule (the notation $\begin{matrix}
[\neg A, B]\\
C
\end{matrix}$ means that the formula $ C $ is derivable from two assumptions, $ \neg A$ and B). We reformulated the rules for the negated disjunction (in the same way as it was done in \cite{KurbisPetrukhin} for the case of \textbf{LP}, \textbf{FDE}, and related logics). In a similar fashion we modified the rules for the negated conjunction. If we replace the rules ($\overleftrightarrow{\vee I_{1}^\prime}$), ($\overleftrightarrow{\vee I_{2}^\prime}$), ($\overleftrightarrow{\vee I_{3}^\prime}$), ($\overleftrightarrow{\vee E^\prime}$), 
($\overleftrightarrow{\neg \wedge I^\prime_1}$), 
($\overleftrightarrow{\neg \wedge I^\prime_2}$), 
($\overleftrightarrow{\neg \wedge I^\prime_3}$), 
($\overleftrightarrow{\neg \wedge E^\prime}$) with $ (\vee I_1) $, $ (\vee I_2) $, $ (\vee E) $ as well as with the presented below rules 
($ \neg\wedge I_1$), $ (\neg\wedge I_2) $, and $ (\neg\wedge E^\prime) $, we get a ND system for \textbf{FDE} introduced in \cite{KurbisPetrukhin}.
\begin{center}
	$ (\neg\wedge I_1) $ $ \dfrac{\neg A}{\neg(A\wedge B)} $\qquad
$ (\neg\wedge I_2) $ $ \dfrac{\neg B}{\neg(A\wedge B)} $\qquad	
($\neg \wedge E^\prime$)$ \dfrac{\begin{matrix}
	& [\neg A] & [\neg B] \\
	\neg(A \wedge B) &  C &  C 
	\end{matrix}}{C} $\quad
\end{center}

In \cite{KurbisPetrukhin} a detailed proof of normalisation for this system is presented. We will follow this proof and indicate the cases which are different for \textbf{FDE} and $\Sfde$.
Let us now, following \cite{KurbisPetrukhin}, recall the terminology regarding 
normalisation.

\begin{defn}
	A \emph{maximal formula} is an occurrence of a formula in a deduction that is the conclusion of an introduction rule and the major premise of an elimination rule.		
\end{defn}
\begin{defn}
	Rules of the kind of disjunction elimination are called \emph{del-rules}. 
\end{defn}  
\begin{defn}
	(a) A \emph{segment} is a sequence of two or more formula occurrences $C_1\ldots C_n$ in a deduction such that $C_1$ is not the conclusion of a del-rule, $C_n$ is not the minor premise of a del-rule and for every $i<n$, $C_i$ is the minor premise of a del-rule and $C_{i+1}$ its conclusion. 
	
	\noindent (b) The \emph{length} of a segment is the number of formulas occurrences of which it consists, its \emph{degree} is their degree. 
	
	\noindent (c) A segment is \emph{maximal} if and only if its last formula is the major premise of an elimination rule. 		
\end{defn} 	 

\begin{defn}
	The \emph{rank} of a deduction $\Pi$ is the pair $\langle d, l\rangle$, where $d$ is the highest degree of any maximal formula or maximal segment in $\Pi$, and $l$ is the sum of the number of maximal formulas and the sum of the lengths of all maximal segments in $\Pi$. If there are no maximal formulas or maximal segments in $\Pi$, $d$ and $l$ are both $0$.		
\end{defn}

Ranks are ordered lexicographically: $\langle d, l\rangle< \langle d', l'\rangle$ iff either $d<d'$, or $d=d'$ and $l<l'$. 

\begin{defn}
	A deduction is \emph{in normal form} if it contains neither maximal formulas nor maximal segments.		
\end{defn}	

\begin{defn}
	A deduction $ \Pi $ of a conclusion $ A $ from undischarged assumptions $ \Gamma $ satisfies the \emph{subformula property} iff every formula in the deduction is a subformula either of $ A $ or of a formula in $ \Gamma $.		
\end{defn}
\begin{defn}
	A deduction satisfies the \emph{negation subformula property} iff every formula occurrence in it is either a subformula of an undischarged assumption or of the conclusion or it is the negation of such a formula.	
\end{defn}

None of the ND systems considered in this paper has the subformula property, but we show that those of them which enjoy normalisation, have the negation subformula property. The proof of normalisation for $ \Sfde $ is similar for the proof for \textbf{FDE} from \cite{KurbisPetrukhin}, let us present those reductions which are different for these logics. 

Reduction procedures. 
%
Disjunction (1st case):
\begin{center}
	\AxiomC{$\Sigma_1$}
	\noLine
	\UnaryInfC{$A$}
	\AxiomC{$\Sigma_2$}
	\noLine
	\UnaryInfC{$\neg B$}	
	\BinaryInfC{$A\lor B$}
	\AxiomC{$[A,\neg B]$}
	\noLine
	\UnaryInfC{$\Pi_1$}
	\noLine
	\UnaryInfC{$C$}
	\AxiomC{$[\neg A,B]$}
	\noLine
	\UnaryInfC{$\Pi_2$}
	\noLine
	\UnaryInfC{$C$}
	\AxiomC{$[A,B]$}
	\noLine
	\UnaryInfC{$\Pi_3$}
	\noLine
	\UnaryInfC{$C$}
	\QuaternaryInfC{$C$}
	\noLine
	\UnaryInfC{$\Xi$}
	\DisplayProof \qquad $\leadsto$ \qquad
\AxiomC{$\Sigma_1$}
\noLine
\UnaryInfC{$A$}
\AxiomC{$\Sigma_2$}
\noLine
\UnaryInfC{$\neg B$}
\noLine
\BinaryInfC{$\Pi_1$}
\noLine
\UnaryInfC{$C$}
\noLine
\UnaryInfC{$\Xi$}
\DisplayProof
\end{center}

Disjunction (2nd case):

\begin{center}
	\AxiomC{$\Sigma_1$}
	\noLine
	\UnaryInfC{$\neg A$}
	\AxiomC{$\Sigma_2$}
	\noLine
	\UnaryInfC{$B$}	
	\BinaryInfC{$A\lor B$}
	\AxiomC{$[A,\neg B]$}
	\noLine
	\UnaryInfC{$\Pi_1$}
	\noLine
	\UnaryInfC{$C$}
	\AxiomC{$[\neg A,B]$}
	\noLine
	\UnaryInfC{$\Pi_2$}
	\noLine
	\UnaryInfC{$C$}
	\AxiomC{$[A,B]$}
	\noLine
	\UnaryInfC{$\Pi_3$}
	\noLine
	\UnaryInfC{$C$}
	\QuaternaryInfC{$C$}
	\noLine
	\UnaryInfC{$\Xi$}
	\DisplayProof \qquad $\leadsto$ \qquad
\AxiomC{$\Sigma_1$}
\noLine
\UnaryInfC{$\neg A$}
\AxiomC{$\Sigma_2$}
\noLine
\UnaryInfC{$B$}
\noLine
\BinaryInfC{$\Pi_2$}
\noLine
\UnaryInfC{$C$}
\noLine
\UnaryInfC{$\Xi$}
\DisplayProof
\end{center}	

Disjunction (3rd case):

\begin{center}
	\AxiomC{$\Sigma_1$}
	\noLine
	\UnaryInfC{$A$}
	\AxiomC{$\Sigma_2$}
	\noLine
	\UnaryInfC{$B$}	
	\BinaryInfC{$A\lor B$}
	\AxiomC{$[A,\neg B]$}
	\noLine
	\UnaryInfC{$\Pi_1$}
	\noLine
	\UnaryInfC{$C$}
	\AxiomC{$[\neg A,B]$}
	\noLine
	\UnaryInfC{$\Pi_2$}
	\noLine
	\UnaryInfC{$C$}
	\AxiomC{$[A,B]$}
	\noLine
	\UnaryInfC{$\Pi_3$}
	\noLine
	\UnaryInfC{$C$}
	\QuaternaryInfC{$C$}
	\noLine
	\UnaryInfC{$\Xi$}
	\DisplayProof \qquad $\leadsto$ \qquad
\AxiomC{$\Sigma_1$}
\noLine
\UnaryInfC{$A$}
\AxiomC{$\Sigma_2$}
\noLine
\UnaryInfC{$ B$}
\noLine
\BinaryInfC{$\Pi_3$}
\noLine
\UnaryInfC{$C$}
\noLine
\UnaryInfC{$\Xi$}
\DisplayProof
\end{center}

Negated conjunction (one of three cases):
\begin{center}
	\AxiomC{$\Sigma_1$}
	\noLine
	\UnaryInfC{$A$}
	\AxiomC{$\Sigma_2$}
	\noLine
	\UnaryInfC{$\neg B$}	
	\BinaryInfC{$\neg(A\wedge B)$}
	\AxiomC{$[A,\neg B]$}
	\noLine
	\UnaryInfC{$\Pi_1$}
	\noLine
	\UnaryInfC{$C$}
	\AxiomC{$[\neg A,B]$}
	\noLine
	\UnaryInfC{$\Pi_2$}
	\noLine
	\UnaryInfC{$C$}
	\AxiomC{$[\neg A,\neg B]$}
	\noLine
	\UnaryInfC{$\Pi_3$}
	\noLine
	\UnaryInfC{$C$}
	\QuaternaryInfC{$C$}
	\noLine
	\UnaryInfC{$\Xi$}
	\DisplayProof \qquad $\leadsto$ \qquad
\AxiomC{$\Sigma_1$}
\noLine
\UnaryInfC{$A$}
\AxiomC{$\Sigma_2$}
\noLine
\UnaryInfC{$\neg B$}
\noLine
\BinaryInfC{$\Pi_1$}
\noLine
\UnaryInfC{$C$}
\noLine
\UnaryInfC{$\Xi$}
\DisplayProof
\end{center}

%

Permutation Conversions (two examples): 

{\small \begin{center}
	\AxiomC{$A\lor B$}
	\AxiomC{$[A,\neg B]$}
	\noLine
	\UnaryInfC{$\Pi_1$}
	\noLine
	\UnaryInfC{$\neg(C\vee D)$}
	\AxiomC{$[\neg A, B]$}
	\noLine
	\UnaryInfC{$\Pi_2$}
	\noLine
	\UnaryInfC{$\neg(C\vee D)$}
	\AxiomC{$[A, B]$}
	\noLine
	\UnaryInfC{$\Pi_3$}
	\noLine
	\UnaryInfC{$\neg(C\vee D)$}	
	\QuaternaryInfC{$\neg(C\vee D)$}
	\UnaryInfC{$\neg C$}
	\DisplayProof\quad	
	$\leadsto$	
	\AxiomC{$A\lor B$}
	\AxiomC{$[A,\neg B]$}
	\noLine
	\UnaryInfC{$\Pi_1$}
	\noLine
	\UnaryInfC{$\neg(C\vee D)$}
	\UnaryInfC{$\neg C$}
	\AxiomC{$[\neg A, B]$}
	\noLine
	\UnaryInfC{$\Pi_2$}
	\noLine
	\UnaryInfC{$\neg(C\vee D)$}
	\UnaryInfC{$\neg C$}
	\AxiomC{$[A,B]$}
	\noLine
	\UnaryInfC{$\Pi_3$}
	\noLine
	\UnaryInfC{$\neg(C\vee D)$}
	\UnaryInfC{$\neg C$}	
	\QuaternaryInfC{$\neg C$}
	\DisplayProof
\end{center}}

\begin{center}
	\AxiomC{$\neg (A\land B)$}
	\AxiomC{$[\neg A,B]$}
	\noLine
	\UnaryInfC{$\Pi_1$}
	\noLine
	\UnaryInfC{$\neg \neg C$}
	\AxiomC{$[A,\neg B]$}
	\noLine
	\UnaryInfC{$\Pi_2$}
	\noLine
	\UnaryInfC{$\neg \neg C$}
	\AxiomC{$[\neg A,\neg B]$}
	\noLine
	\UnaryInfC{$\Pi_3$}
	\noLine
	\UnaryInfC{$\neg \neg C$}	
	\QuaternaryInfC{$\neg \neg C$}
	\UnaryInfC{$C$}
	\DisplayProof	
	$\leadsto$	
	\AxiomC{$\neg (A\land B)$}
	\AxiomC{$[\neg A,B]$}
	\noLine
	\UnaryInfC{$\Pi_1$}
	\noLine
	\UnaryInfC{$\neg \neg C$}
	\UnaryInfC{$C$}
	\AxiomC{$[A,\neg B]$}
	\noLine
	\UnaryInfC{$\Pi_2$}
	\noLine
	\UnaryInfC{$\neg \neg C$}
	\UnaryInfC{$C$}
	\AxiomC{$[\neg A,\neg B]$}
	\noLine
	\UnaryInfC{$\Pi_3$}
	\noLine
	\UnaryInfC{$\neg \neg C$}
	\UnaryInfC{$C$}
	\QuaternaryInfC{$C$}
	\DisplayProof
\end{center}

\begin{theorem}
All deductions in $ \Sfde $ can be normalised. \end{theorem}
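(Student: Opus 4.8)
The plan is to prove weak normalisation by a Prawitz-style induction on the rank $\langle d, l\rangle$ of a deduction, ordered lexicographically. Since this order is well-founded, it suffices to show that any deduction $\Pi$ that is not in normal form can be rewritten, by one of the reduction steps displayed above, into a deduction of the same conclusion from (a subset of) the same undischarged assumptions but of strictly smaller rank; iterating then terminates in a normal deduction. The soundness of each individual conversion (that it preserves conclusion and does not add undischarged assumptions) is routine and is checked alongside.

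First I would locate a conversion to fire. Following \cite{KurbisPetrukhin}, I would select a maximal formula or a maximal segment of the highest degree $d$ that is moreover \emph{topmost} among all such of degree $d$, i.e. so that no maximal formula or maximal segment of degree $d$ occurs above it in $\Pi$. Two situations then arise. If the chosen object is a maximal formula, the conclusion of an introduction rule that is the major premise of the matching elimination, I apply the corresponding detour conversion: the routine ones for $\wedge I/\wedge E_i$, for $\neg\neg I/\neg\neg E$, and for $\neg\vee I^\prime/\neg\vee E^\prime_i$, together with the three displayed cases for $\overleftrightarrow{\vee I^\prime_i}/\overleftrightarrow{\vee E^\prime}$ and the analogous cases for $\overleftrightarrow{\neg\wedge I^\prime_i}/\overleftrightarrow{\neg\wedge E^\prime}$. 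If instead it is a maximal segment ending in a del-rule ($\overleftrightarrow{\vee E^\prime}$ or $\overleftrightarrow{\neg\wedge E^\prime}$) whose last formula is the major premise of an elimination, I apply a permutation conversion that pushes that elimination into each of the three minor-premise subderivations, as in the two displayed examples; this shortens the segment or converts it into a genuine maximal formula that a detour conversion can subsequently remove.

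The bookkeeping step is to verify that each conversion strictly lowers $\langle d, l\rangle$. A detour conversion deletes one maximal formula of degree $d$ and substitutes the premise-derivations $\Sigma_i$ of the introduction into the discharged-assumption occurrences of the single retained branch. The only new maximal formulas this can create sit at the interface between the conclusion of some $\Sigma_i$ and the elimination that previously consumed the corresponding assumption, and their degree is that of a (possibly negated) immediate subformula of the eliminated formula; a permutation conversion likewise replaces a maximal segment by shorter segments and merely relocates an elimination upward. Because the chosen object was topmost of degree $d$, the subderivations involved contain no maximal formula or segment of degree $d$, so no object of degree exceeding $d$ is produced. Hence either $d$ itself drops or, with $d$ unchanged, the length count $l$ decreases. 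The $\wedge$, $\neg\neg$, and $\neg\vee$ cases here are identical to the \textbf{FDE} proof of \cite{KurbisPetrukhin}, so only the displayed $\overleftrightarrow{\vee E^\prime}$ and $\overleftrightarrow{\neg\wedge E^\prime}$ cases need to be treated afresh.

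The step I expect to be the main obstacle is exactly the one that separates $\Sfde$ from \textbf{FDE}: the del-rules $\overleftrightarrow{\vee E^\prime}$ and $\overleftrightarrow{\neg\wedge E^\prime}$ discharge \emph{negated} side formulas (such as the $\neg B$ in $[A,\neg B]$) rather than the bare disjuncts discharged by the ordinary $\vee E$ used in \textbf{FDE}. Consequently a detour conversion may introduce a fresh maximal formula on a negated subformula $\neg B$ whose degree is not automatically below that of $A\vee B$ (it coincides with it precisely when $A$ is atomic), and the substituted $\Sigma_i$ may be duplicated across several assumption occurrences. Keeping the rank under control despite this is the crux of the argument. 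I would handle it through the topmost-redex selection together with the observation that any such newly created maximal formula on $\neg B$ is itself the major premise of one of $\neg\neg E$, $\neg\vee E^\prime_i$, or $\overleftrightarrow{\neg\wedge E^\prime}$, and so reduces in turn to objects on strictly smaller (negated) subformulas, so that the measure still strictly decreases once these induced objects are cleared. Confirming that the displayed disjunction and negated-conjunction conversions, and their permutation conversions, mesh with the count exactly as the corresponding \textbf{FDE} cases do in \cite{KurbisPetrukhin} is the part demanding genuine care; everything else is a direct transcription of that proof.
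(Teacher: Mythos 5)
Your proposal follows the paper's own route: the paper proves the theorem by induction on the rank $\langle d,l\rangle$ of deductions, using exactly the displayed detour and permutation conversions, and defers every detail to the corresponding argument for \textbf{FDE} in the cited work of K\"urbis and Petrukhin; that is precisely the argument you reconstruct, including the topmost-redex selection.

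The one point where you go beyond the paper is also the one point where your argument is not yet watertight, and it is worth fixing because it is exactly where ``similarly to \textbf{FDE}'' does not transfer verbatim. You are right that if degree is the number of connectives, the assumptions $\neg A$, $\neg B$ discharged by $\overleftrightarrow{\vee E^\prime}$ (and by $\overleftrightarrow{\neg\wedge E^\prime}$) need not have degree strictly below that of the major premise, so a detour conversion can create new maximal formulas of degree $d$ --- possibly several, one per discharged occurrence --- and then $\langle d,l\rangle$ does not drop. But your remedy, that ``the measure still strictly decreases once these induced objects are cleared,'' is not a legitimate induction step as stated: the measure must decrease at each appeal to the induction hypothesis, not merely after an unspecified further sequence of reductions. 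Two standard repairs are available. Either treat the conversion together with the immediate reduction of all induced degree-$d$ maximal formulas as a single composite reduction and check that this composite creates only redexes of degree strictly below $d$ (it does, since reducing a maximal $\neg B$ produces redexes only on proper, possibly negated, subformulas of $B$); or, more simply, redefine the degree so that $\deg(\neg A)=\deg(A)+1$ and $\deg(A\circ B)=\max(\deg A,\deg B)+2$, after which every formula discharged by a del-rule has strictly smaller degree than its major premise and the single-step argument applies unchanged. One of these must be made explicit for the induction to close.
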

\begin{proof}
	By induction on the rank of deductions, using the reduction steps. Similarly to \cite[Theorem 1]{KurbisPetrukhin}.
\end{proof}
\begin{theorem} $ \Sfde $ has the negation subformula property.	
\end{theorem}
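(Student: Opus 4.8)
The plan is to read the property off the structure of \emph{normal} deductions. By the normalisation theorem just proved, every deduction can be brought to normal form, and the reductions leave the undischarged assumptions $\Gamma$ and the conclusion $C$ unchanged; so it suffices to show that an arbitrary normal deduction $\Pi$ of $C$ from $\Gamma$ has the negation subformula property. Following \cite{KurbisPetrukhin}, I would treat $(\overleftrightarrow{\vee E^\prime})$ and $(\overleftrightarrow{\neg \wedge E^\prime})$ as the del-rules and use the segment notion already defined, so that the passage through a del-rule from a minor premise to its conclusion is absorbed into a segment. A track is then a sequence of segments in which the last formula of one segment is the (major) premise of a rule whose conclusion opens the next. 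The structural fact I would establish first is the usual one: in a normal deduction each track decomposes into an E-part, in which every formula is the major premise of an elimination rule and its successor the corresponding conclusion, followed by a minimal segment, followed by an I-part built by introduction rules. Normality is exactly what forbids a turn from the conclusion of an I-rule straight into the major premise of an E-rule, so the E-part always precedes the I-part.

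Next I would verify, rule by rule, the two local claims that force the \emph{negation} (rather than the plain) subformula property: (i) the conclusion of every elimination rule, and every assumption discharged by a del-rule, is a subformula of its major premise or the negation of such a subformula; and (ii) every premise of an introduction rule is a subformula of the conclusion or the negation of such a subformula. For example $(\neg \vee E^\prime_1)$ sends $\neg(A\vee B)$ to $\neg A$, the negation of the subformula $A$; $(\overleftrightarrow{\vee E^\prime})$ discharges $A,\neg B$, $\neg A,B$, $A,B$, each a subformula of $A\vee B$ or its negation; dually the rules $(\overleftrightarrow{\neg \wedge I^\prime_i})$ build $\neg(A\wedge B)$ out of negation-subformulas of it. These are precisely the cases that differ from the \textbf{FDE} calculus of \cite{KurbisPetrukhin}. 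From (i) it follows that along the E-part of a track every formula is a negation-subformula of the track's top formula, and from (ii) that along the I-part every formula is a negation-subformula of the track's bottom formula; the minimal segment lies in both parts.

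I would then run the main induction over the tracks of $\Pi$, ordered so that the main track, the one terminating in $C$, has order $0$, and a track has order $n+1$ when its top formula is discharged by a del-rule whose major premise lies on a track of order $n$ (such an ordering exists since $\Pi$ is a finite tree and the del-rules are the only assumption-discharging rules here). For the main track the top formula is an undischarged assumption, i.e.\ a member of $\Gamma$; by (i) its E-part consists of negation-subformulas of that assumption, and by (ii) its I-part, ending in $C$, consists of negation-subformulas of $C$. For a track of higher order, its top formula is, by the inductive hypothesis applied to the earlier track carrying the relevant major premise together with clause (i), a negation-subformula of an element of $\Gamma\cup\{C\}$; using the explicit forms of the assumptions discharged by the del-rules one checks that composing ``subformula of'' with ``negation-subformula of'' again yields a negation-subformula of $\Gamma\cup\{C\}$, and then (i)–(ii) propagate the property along the whole track. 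Since every formula occurrence of $\Pi$ lies on some track, this gives the negation subformula property.

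The step I expect to be the main obstacle is the bookkeeping around the del-rules. Because $(\overleftrightarrow{\vee E^\prime})$ and $(\overleftrightarrow{\neg \wedge E^\prime})$ each discharge three \emph{pairs} of assumptions that are themselves subformulas or negations of subformulas of the major premise, the transfer of the property from a major premise to the assumptions it discharges has to be phrased through the segment definition rather than through bare formula occurrences, and the composition argument must be carried out carefully enough to rule out spurious double negations. Everything else is a direct, if tedious, case analysis mirroring the treatment in \cite{KurbisPetrukhin}.
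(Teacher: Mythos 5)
Your proposal is correct and follows essentially the same route as the paper, which simply defers to the track-based argument of \cite[Theorem 2]{KurbisPetrukhin}: decompose a normal deduction into tracks with an E-part followed by an I-part, check rule by rule that eliminations yield subformulas or negations of subformulas of the major premise (and dually for introductions), and induct on the order of tracks, with $(\overleftrightarrow{\vee E^\prime})$ and $(\overleftrightarrow{\neg \wedge E^\prime})$ as the del-rules. The local checks you single out are exactly the cases that differ from \textbf{FDE}, so nothing further is needed.
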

\begin{proof}
Similarly to \cite[Theorem 2]{KurbisPetrukhin}.
\end{proof}

Let us present Belikov's ND system $ \ND_{\dSfde}^B $ for $ \dSfde $. It has the rules $ (\wedge I) $, $ (\vee I_1) $, $ (\vee I_2) $, $ (\vee E) $, ($\neg \neg I$), ($\neg \neg E$), ($\neg \vee I$), ($\neg \vee E$), ($\neg \wedge I$), ($\neg \vee E$), and the following ones:

\begin{center}
	$ (\wedge I_2) $ $ \dfrac{A\quad \neg A}{A\wedge B} $\qquad
	$ (\wedge C) $ $ \dfrac{A\wedge B}{B\wedge A} $\qquad
	$ \overleftrightarrow{(\wedge E_1)} $ $ \dfrac{A\wedge B}{A\vee B} $\qquad
	$ \overleftrightarrow{(\wedge E_2)} $ $ \dfrac{A\wedge B}{\neg A\vee B} $\qquad
	$ \overleftrightarrow{(\wedge E_3)} $ $ \dfrac{A\wedge B}{A\vee \neg B} $	
\end{center}

This system has similar problems as Belikov's system for $ \Sfde $. This time the troublemaker is the rule $ (\wedge C) $. Let us formulate a new ND system for $ \dSfde $ which we call $ \ND_{\dSfde}^\prime $. It has the rules $ (\vee I_1) $, $ (\vee I_2) $, $ (\vee E) $, $ (\wedge I) $, $ (\wedge I_2) $, ($\neg \neg I$), ($\neg \neg E$), ($\neg \wedge I_1$), ($\neg \wedge I_2$), ($ \neg\vee I^\prime$), $(\neg\wedge E^\prime)$, and the following ones: 

\begin{center}
		$ (\wedge I_3) $ $ \dfrac{B\quad\neg B}{A\wedge B} $\qquad
		($ \neg\vee I_2$) $ \dfrac{A\quad\neg A}{\neg(A\vee B)} $\qquad
		($ \neg\vee I_3$) $ \dfrac{B\quad\neg B}{\neg(A\vee B)} $\qquad
\end{center}
\begin{center}	
$ \overleftrightarrow{(\wedge E)} $ $ \dfrac{\begin{matrix}
	& [A,B] & [A,\neg A] & [B,\neg B]\\
	A \wedge B &  C &  C & C
	\end{matrix}}{C} $\qquad
 $ \overleftrightarrow{(\neg\vee E)}$ $ \dfrac{\begin{matrix}
			& [\neg A,\neg B] & [A,\neg A] & [B,\neg B]\\
			\neg(A \vee B) &  C &  C & C
			\end{matrix}}{C} $\quad
\end{center}

\begin{theorem}
	For any formula $ A $ and any set of formulas $ \Gamma $, it holds that $ \Gamma\vdash A$ in $ \ND_{\dSfde}^B $ iff $ \Gamma\vdash A$ in $ \ND_{\dSfde}^\prime $.
\end{theorem}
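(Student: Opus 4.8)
The plan is to prove the two systems $\ND_{\dSfde}^B$ and $\ND_{\dSfde}^\prime$ are deductively equivalent by showing that each rule of one system is admissible (indeed derivable) in the other. Since deducibility $\Gamma\vdash A$ is generated inductively by the rules, it suffices to check that every rule of $\ND_{\dSfde}^B$ can be simulated by a derivation using the rules of $\ND_{\dSfde}^\prime$, and conversely. Both systems share the rules $(\vee I_1)$, $(\vee I_2)$, $(\vee E)$, $(\wedge I)$, $(\neg\neg I)$, $(\neg\neg E)$, $(\neg\wedge I_1)$, $(\neg\wedge I_2)$, $(\neg\vee I')$, and $(\neg\wedge E')$ (modulo the two presentations of the negated-conjunction and negated-disjunction rules, which I would first confirm are interderivable in the manner already sketched for $\ND_{\Sfde}^\prime$). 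So the real work concerns the non-shared rules.

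\noindent\textbf{From $\ND_{\dSfde}^B$ into $\ND_{\dSfde}^\prime$.} First I would handle $(\wedge I_2)$, which appears in both systems, leaving $(\wedge C)$, $\overleftrightarrow{(\wedge E_1)}$, $\overleftrightarrow{(\wedge E_2)}$, and $\overleftrightarrow{(\wedge E_3)}$ to derive. The rule $\overleftrightarrow{(\wedge E)}$ of $\ND_{\dSfde}^\prime$ is the key tool here: from $A\wedge B$ it lets me case-split on $[A,B]$, $[A,\neg A]$, $[B,\neg B]$. For $\overleftrightarrow{(\wedge E_1)}$ (concluding $A\vee B$), in the first case $[A,B]$ I apply $(\vee I_1)$; in case $[A,\neg A]$ I derive $A\vee B$ from $A$ by $(\vee I_1)$; similarly for $[B,\neg B]$ via $(\vee I_2)$ — so each branch yields $A\vee B$ and $\overleftrightarrow{(\wedge E)}$ closes it. For $\overleftrightarrow{(\wedge E_2)}$ and $\overleftrightarrow{(\wedge E_3)}$ I do the analogous case analysis, using in each branch the appropriate disjunction introduction together with the contradiction-assumptions (e.g. $\neg A$ or $\neg B$ is directly available). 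The troublesome rule is $(\wedge C)$: to derive $B\wedge A$ from $A\wedge B$ I again invoke $\overleftrightarrow{(\wedge E)}$ and reconstruct $B\wedge A$ in each branch — from $[A,B]$ by $(\wedge I)$, from $[A,\neg A]$ by $(\wedge I_2)$ (with roles swapped so that the contradiction on $A$ builds $B\wedge A$), and from $[B,\neg B]$ by $(\wedge I_3)$. This is exactly why $\ND_{\dSfde}^\prime$ includes both $(\wedge I_2)$ and $(\wedge I_3)$: they provide the two infectious-$\both$ entry points that make $(\wedge C)$ derivable without taking commutativity as primitive.

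\noindent\textbf{From $\ND_{\dSfde}^\prime$ into $\ND_{\dSfde}^B$.} Here the non-shared rules to derive are $(\wedge I_3)$, $(\neg\vee I_2)$, $(\neg\vee I_3)$, $\overleftrightarrow{(\wedge E)}$, and $\overleftrightarrow{(\neg\vee E)}$. The rule $(\wedge I_3)$ (building $A\wedge B$ from $B,\neg B$) is got from $(\wedge I_2)$ applied to $B,\neg B$ to obtain $B\wedge A$, followed by $(\wedge C)$ to swap to $A\wedge B$. The introductions $(\neg\vee I_2)$ and $(\neg\vee I_3)$ for the infectious case follow by deriving the appropriate conjunction of negations and applying $(\neg\vee I)$, or by building the relevant disjunction and using the $\ND_{\dSfde}^B$ negated-disjunction machinery. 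The eliminations $\overleftrightarrow{(\wedge E)}$ and $\overleftrightarrow{(\neg\vee E)}$ are the most delicate: from $A\wedge B$ I need to justify the three-way case analysis. I expect to obtain $A\vee B$ via $\overleftrightarrow{(\wedge E_1)}$ and then feed it into $(\vee E)$, supplementing with the conjunction-elimination directions $\overleftrightarrow{(\wedge E_2)},\overleftrightarrow{(\wedge E_3)}$ to recover the negation information needed to discharge the $[A,\neg A]$ and $[B,\neg B]$ branches.

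\noindent\textbf{Main obstacle.} The hardest step will be simulating the generalised eliminations $\overleftrightarrow{(\wedge E)}$ and $\overleftrightarrow{(\neg\vee E)}$ inside $\ND_{\dSfde}^B$, since a single application packages a three-valued case distinction (reflecting that $\both$ is infectious in $\dSfde$) that must be reassembled from the ordinary $(\vee E)$ together with Belikov's directed elimination rules $\overleftrightarrow{(\wedge E_i)}$. Getting the discharge of assumptions to match up correctly in each branch — in particular threading $\neg A$ and $\neg B$ into the right subderivation — is the part where care is required; everything else reduces to short, routine derivations of the kind illustrated above.
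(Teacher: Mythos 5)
Your proposal is correct and follows essentially the same route as the paper, which merely states that the proof is by induction on the length of derivations and leaves the rule-by-rule simulations to the reader; the derivations you sketch all go through, including the key step of recovering $\overleftrightarrow{(\wedge E)}$ in $\ND_{\dSfde}^B$ by nesting $(\vee E)$ on $A\vee B$, $\neg A\vee B$, and $A\vee\neg B$ obtained from $\overleftrightarrow{(\wedge E_1)}$--$\overleftrightarrow{(\wedge E_3)}$, and of recovering $(\wedge C)$ in $\ND_{\dSfde}^\prime$ via the three-way case split of $\overleftrightarrow{(\wedge E)}$. The only slip is cosmetic: in the $(\wedge C)$ simulation the branch $[A,\neg A]$ needs $(\wedge I_3)$ (contradiction on the second conjunct of $B\wedge A$) and the branch $[B,\neg B]$ needs $(\wedge I_2)$, so the two rule labels should be interchanged.
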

\begin{proof}
	By induction on the length of the derivation. Left for the reader.
\end{proof}

Note that in Section \ref{5} we present the completeness proof for the logic $ \dSrfde $ which can be easily adapted for $ \dSfde $. 
We can do the reduction steps in a similar way as for $ \Sfde $ and can state the following theorems.

\begin{theorem}
	All deductions in $ \dSfde $ can be normalised. \end{theorem}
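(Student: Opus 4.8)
The plan is to prove normalisation for $\dSfde$ by induction on the rank $\langle d, l\rangle$ of a deduction, ordered lexicographically, exactly as in the proof for $\Sfde$ and, behind it, the proof for \textbf{FDE} in \cite{KurbisPetrukhin}. First I would fix, for $\ND_{\dSfde}^\prime$, which rules count as del-rules and which formula occurrences can be maximal. The del-rules are $(\vee E)$, $(\neg\wedge E^\prime)$, $\overleftrightarrow{(\wedge E)}$, and $\overleftrightarrow{(\neg\vee E)}$; the only elimination rule that is not a del-rule is $(\neg\neg E)$. Accordingly, a maximal formula is an occurrence of $\neg\neg A$, $A\vee B$, $\neg(A\wedge B)$, $A\wedge B$, or $\neg(A\vee B)$ that is the conclusion of the matching introduction rule and the major premise of the matching elimination rule. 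The whole argument is dual to the $\Sfde$ case: there $\none$ is infectious and the three-branch rules govern $\vee$ and $\neg\wedge$, whereas here $\both$ is infectious and the three-branch rules govern $\wedge$ and $\neg\vee$, while $\vee$ and $\neg\wedge$ behave like their ordinary two-branch \textbf{FDE} counterparts.

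The proper reductions split into two groups. The cases for $\vee$, $\neg\wedge$, and $\neg\neg$ are literally those of the \textbf{FDE} system of \cite{KurbisPetrukhin}, so I would only reproduce them. The genuinely new cases are those for the infectious connectives, and they mirror the displayed $\Sfde$ reductions for $\vee$ and $\neg\wedge$. For $\overleftrightarrow{(\wedge E)}$, when its major premise $A\wedge B$ is the conclusion of $(\wedge I)$ I would delete the elimination, graft the subderivations of $A$ and $B$ onto the open assumptions $[A,B]$ of the first minor branch $\Pi_1$, and keep only that branch; when $A\wedge B$ comes from $(\wedge I_2)$ I would instead graft the subderivations of $A$ and $\neg A$ onto $[A,\neg A]$ and keep $\Pi_2$; and when it comes from $(\wedge I_3)$ I would graft $B$, $\neg B$ onto $[B,\neg B]$ and keep $\Pi_3$. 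The three reductions for $\overleftrightarrow{(\neg\vee E)}$ against $(\neg\vee I^\prime)$, $(\neg\vee I_2)$, $(\neg\vee I_3)$ have identical form, matching the branches $[\neg A,\neg B]$, $[A,\neg A]$, $[B,\neg B]$ respectively. Each such reduction removes one maximal formula of degree $d$ and copies the minor subderivations into the places vacated by discharged assumptions.

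Next I would set up the permutation conversions for all four del-rules, pushing an elimination rule applied to the conclusion of a del-rule up into each of that del-rule's minor premises, exactly as in the two displayed $\Sfde$ examples; this shortens or removes maximal segments whose final formula is the major premise of an elimination. The main obstacle, as always in Prawitz-style normalisation, is to show that a single reduction strictly lowers the rank. I would handle this by always selecting a maximal formula or maximal segment of maximal degree $d$ with no maximal formula or segment of degree $d$ standing above it, so that the proper reduction—even though it may duplicate subderivations across the three branches of $\overleftrightarrow{(\wedge E)}$ or $\overleftrightarrow{(\neg\vee E)}$—can only create new maximalities of degree strictly below $d$; hence either $d$ drops or $d$ stays fixed and $l$ decreases. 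The extra bookkeeping relative to \cite{KurbisPetrukhin} is precisely that $\wedge$ and $\neg\vee$ now behave as genuine three-premise branching connectives, but since their reductions have the same shape as the already-verified $\Sfde$ reductions for $\vee$ and $\neg\wedge$, the rank computation carries over verbatim.
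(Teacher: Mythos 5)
Your proposal is correct and follows essentially the same route as the paper, which simply notes that the reduction steps for $\dSfde$ are carried out as for $\Sfde$ (and ultimately as in the normalisation proof for \textbf{FDE}): you correctly identify the del-rules of $\ND_{\dSfde}^\prime$, match each of the three $\wedge$- and $\neg\vee$-introduction rules to the corresponding branch of $\overleftrightarrow{(\wedge E)}$ and $\overleftrightarrow{(\neg\vee E)}$, and run the standard induction on rank with permutation conversions. The only difference is that you spell out the details the paper leaves implicit.
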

\begin{theorem} $ \dSfde $ has the negation subformula property.	
\end{theorem}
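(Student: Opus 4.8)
The plan is to reduce the claim to normal deductions and then analyse their shape. By the preceding normalisation theorem every deduction in $\ND_{\dSfde}^\prime$ reduces to normal form, so it suffices to show that every normal deduction of a conclusion $A$ from undischarged assumptions $\Gamma$ has the negation subformula property. I would run this exactly along the track-based argument of \cite{KurbisPetrukhin} already used above for $\Sfde$, spelling out only the steps peculiar to the rules of $\ND_{\dSfde}^\prime$, since the logic $\dSfde$ is presented by the normalisable system $\ND_{\dSfde}^\prime$.

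First I would fix, for a formula $X$, the set of its \emph{negation subformulas} --- the subformulas of $X$ together with their negations --- and verify a purely local property of every rule: for each introduction rule each premise is a negation subformula of the conclusion, and for each elimination rule (the del-rules $(\vee E)$, $(\neg\wedge E^\prime)$, $\overleftrightarrow{(\wedge E)}$, $\overleftrightarrow{(\neg\vee E)}$ and the rule $(\neg\neg E)$) both the conclusion and every assumption discharged by the rule are negation subformulas of the major premise. For the rules shared with \textbf{FDE} and $\Sfde$ this is already recorded; the genuinely new verifications concern the three-branch eliminations $\overleftrightarrow{(\wedge E)}$ and $\overleftrightarrow{(\neg\vee E)}$, whose side assumptions $[A,\neg A]$ and $[B,\neg B]$ each consist of an immediate subformula of the core of the major premise together with its negation, and the introductions $(\wedge I_2)$, $(\wedge I_3)$, $(\neg\vee I_2)$, $(\neg\vee I_3)$, whose premises $A,\neg A$ (respectively $B,\neg B$) are likewise an immediate subformula of the conclusion's core and its negation. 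In every case the premise or discharged assumption is a subformula of the relevant formula or the negation of one, so the local property holds.

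Next I would import the notion of track from \cite{KurbisPetrukhin} (threaded through segments, passing from the major premise of a del-rule into its discharged assumptions). Because the deduction is normal it contains no maximal formula and no maximal segment, so each track splits into an E-part, whose segments end in major premises of elimination rules, followed by a minimum segment, followed by an I-part built by introduction rules. I would then induct on the order of tracks. The main track (order $0$) ends in the conclusion $A$ and begins in an undischarged assumption of $\Gamma$; reading its I-part downwards the local property keeps every formula a negation subformula of $A$, and reading its E-part downwards from the top it keeps every formula a negation subformula of the top assumption. A track of order $n+1$ has a top formula discharged by a del-rule whose major premise lies on a track of order $n$; by the local property that top formula is a negation subformula of this major premise, which by the induction hypothesis is already a negation subformula of some member of $\Gamma$ or of $A$, and the E-/I-part analysis then propagates this along the whole track. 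Hence every formula occurrence is a negation subformula of $A$ or of an undischarged assumption, which is the claim.

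The main obstacle is that \emph{being a negation subformula of} is not transitive in general: chaining two negation-introducing steps threatens to manufacture a double negation $\neg\neg X$ whose inner $\neg X$ is no longer a subformula of the endpoint. I would therefore avoid transitivity and verify each propagation step directly against a \emph{fixed} endpoint, splitting on whether the formula in hand is a subformula of that endpoint or the negation of one. The delicate cases are precisely the $\dSfde$-specific eliminations applied to a negated major premise, for instance $\overleftrightarrow{(\neg\vee E)}$ acting on $\neg(A\vee B)$ when the latter is itself the negation of a subformula of an endpoint: there one checks that $A$, $B$, $\neg A$, $\neg B$ all remain negation subformulas of that endpoint. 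Because every rule of $\ND_{\dSfde}^\prime$ composes or decomposes by exactly one connective, and double negations are governed solely by $(\neg\neg I)$ and $(\neg\neg E)$, which relate $\neg\neg A$ to its subformula $A$, each of these direct checks goes through and the induction closes just as in the $\Sfde$ case.
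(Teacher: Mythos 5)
Your proposal is correct and takes essentially the same route as the paper, which for this theorem simply defers to the track-based argument of \cite{KurbisPetrukhin}: normalise, verify the local negation-subformula property of each rule (including the $\dSfde$-specific ones $(\wedge I_2)$, $(\wedge I_3)$, $(\neg\vee I_2)$, $(\neg\vee I_3)$, $\overleftrightarrow{(\wedge E)}$, $\overleftrightarrow{(\neg\vee E)}$), and induct on the order of tracks, handling the non-transitivity of ``negation subformula of'' by checking against a fixed endpoint. One small overstatement: for the del-rules the \emph{conclusion} $C$ is arbitrary and is not in general a negation subformula of the major premise (only the discharged assumptions are), but since your track and segment machinery routes del-rule conclusions through their minor premises rather than through the major premise, nothing in the argument actually relies on that clause.
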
			

\section{Fitting-style relatives of $ \Sfde $ and $ \dSfde $}\label{3}
In \cite{Petrukhin}, ND systems for two more logics, Fitting's \cite{Fitting94} $ \Srfde $ and $ \Slfde $ introduced in \cite{Petrukhin} ($ \bf FDE^\rightarrow$ and $ \bf FDE^\leftarrow$ in the notation of \cite{Petrukhin}) were formulated. Let us  present the matrices for their conjunctions and disjunctions (the negation is the same as in $ \Sfde $): on the left we see the pair of conjunction and disjunction for $ \Srfde $, on the right the pair for $ \Slfde $.
\begin{center}
	\begin{tabular}{|c|cccc|}\hline
		$ \wedge $ & \true & \both & \none & \false \\\hline
		\true & \true & \both & \none & \false \\\hline
		\both & \both & \both & \false & \false \\\hline
		\none & \none & \none & \none & \none \\\hline
		\false & \false & \false & \false & \false \\\hline
	\end{tabular}
	\begin{tabular}{|c|cccc|}\hline
		$ \vee $ & \true & \both & \none & \false \\\hline
		\true & \true & \true & \true & \true \\\hline
		\both & \true & \both & \true & \both \\\hline
		\none & \none & \none & \none & \none\\\hline
		\false & \true & \both & \none & \false \\\hline
	\end{tabular}
	\begin{tabular}{|c|cccc|}\hline
		$ \wedge $ & \true & \both & \none & \false \\\hline
		\true & \true & \both & \none & \false \\\hline
		\both & \both & \both & \none & \false \\\hline
		\none & \none & \false & \none & \false \\\hline
		\false & \false & \false & \none & \false \\\hline
	\end{tabular}
	\begin{tabular}{|c|cccc|}\hline
		$\vee$ & \true & \both & \none & \false \\\hline
		\true & \true & \true & \none & \true \\\hline
		\both & \true & \both & \none & \both \\\hline
		\none & \true & \true & \none & \none\\\hline
		\false & \true & \both & \none & \false \\\hline
	\end{tabular} 
\end{center}	

Following the analogy with $ \Sfde $ and $ \dSfde $, we define two new logics which we call $ \dSrfde$ and $\dSlfde $. The negation is the same as in $ \Sfde $,  on the left we present the pair of conjunction and disjunction for $ \dSrfde $, on the right the pair for $ \dSlfde $.

\begin{center}
	\begin{tabular}{|c|cccc|}\hline
		$ \wedge $ & \true & \both & \none & \false \\\hline
		\true & \true & \both & \none & \false \\\hline
		\both & \both & \both & \both & \both \\\hline
		\none & \none & \false & \none & \false \\\hline
		\false & \false & \false & \false & \false \\\hline
	\end{tabular}
	\begin{tabular}{|c|cccc|}\hline
		$ \vee $ & \true & \both & \none & \false \\\hline
		\true & \true & \true & \true & \true \\\hline
		\both & \both & \both & \both & \both \\\hline
		\none & \true & \true & \none & \none\\\hline
		\false & \true & \both & \none & \false \\\hline
	\end{tabular}
	\begin{tabular}{|c|cccc|}\hline
		$ \wedge $ & \true & \both & \none & \false \\\hline
		\true & \true & \both & \none & \false \\\hline
		\both & \both & \both & \false & \false \\\hline
		\none & \none & \both & \none & \false \\\hline
		\false & \false & \both & \false & \false \\\hline
	\end{tabular}
	\begin{tabular}{|c|cccc|}\hline
		$\vee$ & \true & \both & \none & \false \\\hline
		\true & \true & \both & \true & \true \\\hline
		\both & \true & \both & \false & \both \\\hline
		\none & \true & \both & \none & \none\\\hline
		\false & \true & \both & \none & \false \\\hline
	\end{tabular} 
\end{center}

The entailment relation in the logics in question is defined as follows ($ \bf L\in\{\Srfde,\Slfde,\dSrfde,\dSlfde \} $).
\begin{center}
	$ \Gamma\models_{\bf L}\Delta $ iff $ v(A)\in \{\true,\both\}$, for each $ A\in\Gamma $, implies $ v(B)\in \{\true,\both\}$, for some $ B\in\Delta $, for each valuation $ v $.	
\end{center}

ND system $ \ND_{\Srfde} $ for $ \Srfde $ presented in \cite{Petrukhin} is obtained from $ \ND_{\Sfde} $ by the replacement of the rules $ (\overleftrightarrow{\vee I_{1}}) $,  $ (\overleftrightarrow{\vee I_{3}}) $, and $ (\overleftrightarrow{\vee E}) $ with the rules $ (\vee I_1) $  and ($\overrightarrow{\vee E}$). We present a new ND system $ \ND_{\Srfde}^\prime $ for $ \Srfde $ which is obtained from $ \ND_{\Sfde}^\prime $ by the replacement of the rules $ (\overleftrightarrow{\vee I_{1}^\prime}) $,  $ (\overleftrightarrow{\vee I_{3}^\prime}) $, $ (\overleftrightarrow{\vee E}^\prime) $, ($\overleftrightarrow{\neg\wedge I_1}^\prime$), ($\overleftrightarrow{\neg\wedge I_3}^\prime$), and ($\overleftrightarrow{\neg\wedge E}^\prime$) with the rules $(\vee I_{1})$,   $(\overrightarrow{\vee E}^\prime)$, $ (\neg\wedge I_1) $, and ($\overrightarrow{\neg\wedge E}^\prime$).

\begin{center}
	($\overrightarrow{\vee E}$) $ \dfrac{\begin{matrix}
		& [A] & [\neg A\wedge B] \\
		A \vee B &  C &  C
		\end{matrix}}{C} $\qquad
	($\overrightarrow{\vee E}^\prime$) $ \dfrac{\begin{matrix}
		& [A] & [\neg A, B] \\
		A \vee B &  C &  C
		\end{matrix}}{C} $\qquad
	($\overrightarrow{\neg\wedge E}^\prime$) $ \dfrac{\begin{matrix}
		& [\neg A] & [A,\neg B] \\
		\neg(A \wedge B) &  C &  C
		\end{matrix}}{C} $\qquad		
\end{center}

ND system $ \ND_{\Slfde} $ for $ \Slfde $ presented in \cite{Petrukhin} is obtained from $ \ND_{\Sfde} $ by the replacement of the rules  $ (\overleftrightarrow{\vee I_{2}}) $, $ (\overleftrightarrow{\vee I_{3}}) $, and $ (\overleftrightarrow{\vee E}) $ with the rules  $ (\vee I_2) $, and ($\overleftarrow{\vee E}$). We present a new ND system $ \ND_{\Slfde}^\prime $ for $ \Slfde $ which is obtained from $ \ND_{\Sfde}^\prime $ by the replacement of the rules $ (\overleftrightarrow{\vee I_{2}^\prime})$,  $(\overleftrightarrow{\vee I_{3}^\prime})$, and $ (\overleftrightarrow{\vee E}^\prime) $, $ \overleftrightarrow{(\neg\wedge I_2^\prime)} $, $ \overleftrightarrow{(\neg\wedge I_3^\prime)} $, and $(\overleftrightarrow{\neg\wedge E}^\prime)$ with the rules  $(\vee I_{2})$, $(\overleftarrow{\vee E}^\prime)$, $ (\neg\wedge I_2) $, and $(\overleftarrow{\neg\wedge E}^\prime)$.
\begin{center}
	($\overleftarrow{\vee E}$) $ \dfrac{\begin{matrix}
		& [A\wedge\neg B] & [B]\\
		A \vee B &  C &  C
		\end{matrix}}{C} $\qquad					
	($\overleftarrow{\vee E}^\prime$) $ \dfrac{\begin{matrix}
		& [A,\neg B] & [B]\\
		A \vee B &  C &  C
		\end{matrix}}{C} $\qquad
	($\overleftarrow{\neg\wedge E}^\prime$) $ \dfrac{\begin{matrix}
		& [\neg A,B] & [B]\\
		\neg(A \wedge B) &  C &  C
		\end{matrix}}{C} $\qquad	
\end{center} 

The proof for $ \Sfde $ can be easily adapted for $ \Srfde $ and $ \Slfde $.
\begin{theorem}
	All deductions in $ \Srfde $ and $ \Slfde $ can be normalised. \end{theorem}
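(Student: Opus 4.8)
The plan is to mirror the normalisation argument already carried out for $ \Sfde $, since the ND systems $ \ND_{\Srfde}^\prime $ and $ \ND_{\Slfde}^\prime $ differ from $ \ND_{\Sfde}^\prime $ only in the shape of the disjunction and negated-conjunction elimination rules (and a corresponding restriction of the introduction rules). I would argue by induction on the rank $ \langle d, l\rangle $ of a deduction, ordered lexicographically, exactly as in \cite[Theorem 1]{KurbisPetrukhin} and in Theorem 2.1 above. The base case is the deduction already in normal form; for the inductive step I would pick a maximal formula or maximal segment of highest degree $ d $ and, among those, one that is as high up in the deduction as possible (so that the subdeduction feeding into it is already normal), then apply the appropriate reduction to lower the rank.

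The reductions themselves split into the usual two families. First, the \emph{detour conversions}: when the conclusion of an introduction rule is the major premise of the matching elimination rule. For $ \ND_{\Srfde}^\prime $ the relevant rule $ (\overrightarrow{\vee E}^\prime) $ now has only two minor branches, with discharged assumption sets $ [A] $ and $ [\neg A, B] $; correspondingly there are two detour cases, one for the introduction $ (\vee I_1) $ feeding a proof of $ A $ into the $ [A] $-branch, and one for $ (\overleftrightarrow{\vee I_2^\prime}) $ feeding proofs of $ \neg A $ and $ B $ into the $ [\neg A, B] $-branch, each replacing the maximal formula by the corresponding minor subdeduction $ \Pi_i $ with the discharged assumptions filled in by $ \Sigma_1,\Sigma_2 $. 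The negated-conjunction rule $ (\overrightarrow{\neg\wedge E}^\prime) $ is handled symmetrically against $ (\neg\wedge I_1) $ and $ (\overleftrightarrow{\neg\wedge I_2^\prime}) $. For $ \ND_{\Slfde}^\prime $ one runs the same bookkeeping with the branches $ [A,\neg B] $ and $ [B] $, and $ (\vee I_2) $ playing the distinguished role. These are the direct two-branch analogues of the three-branch disjunction and negated-conjunction conversions displayed above for $ \Sfde $, so no genuinely new phenomenon arises. Second, the \emph{permutation conversions}: when the conclusion of one of these del-rules is itself the major premise of an elimination rule, one pushes that elimination up into each minor branch, exactly as in the two examples shown above for $ \Sfde $; this shortens the maximal segment and lowers $ l $.

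Having both kinds of reduction, I would verify in each case that the reduction strictly decreases the rank: a detour conversion either removes a maximal formula of degree $ d $ (lowering $ l $, and possibly $ d $) or, when the introduced formula heads a segment, shortens or removes that segment; a permutation conversion shortens the relevant maximal segment while not creating any new maximal formula or segment of degree $ \geq d $. By the induction hypothesis the resulting lower-rank deduction normalises, and hence so does the original. The negation subformula property for $ \Srfde $ and $ \Slfde $ then follows by the same analysis as in \cite[Theorem 2]{KurbisPetrukhin} and Theorem 2.2 above: inspecting a normal deduction, every formula occurrence is a subformula of an undischarged assumption or of the conclusion, or the negation of such a subformula, because the only formulas the rules introduce beyond genuine subformulas are negations of subformulas.

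The main obstacle I expect is not conceptual but the careful case-checking that the two-branch elimination rules behave, with respect to segments and permutations, just like the three-branch rules of $ \Sfde $: in particular one must confirm that a detour conversion never leaves behind a fresh maximal formula of the same degree $ d $ (which would stall the induction), and that the permutation conversions interact correctly when the minor premise of a del-rule is the conclusion of another del-rule, so that the notion of maximal segment of \autoref{defn} is the right invariant to track. Once one checks that the asymmetry between $ \Srfde $ and $ \Slfde $ is purely a relabelling of which introduction rule is distinguished, the argument for $ \Sfde $ transfers essentially verbatim, which is why it suffices to say that the proof can be \emph{easily adapted}.
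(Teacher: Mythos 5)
Your proposal is correct and matches the paper's intended argument: the paper itself only remarks that the proof for $\Sfde$ ``can be easily adapted'' to $\Srfde$ and $\Slfde$, and your induction on rank with two-branch detour conversions (pairing $(\vee I_1)$ and $(\overleftrightarrow{\vee I_2^\prime})$ with the branches of $(\overrightarrow{\vee E}^\prime)$, and dually for $\Slfde$ and for negated conjunction) together with the permutation conversions is exactly that adaptation. No substantive difference from the paper's route.
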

\begin{theorem} $ \Srfde $ and $ \Slfde $ have the negation subformula property.	
\end{theorem}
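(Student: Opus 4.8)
The plan is to adapt the proof of the negation subformula property for $\Sfde$ (which itself follows \cite{KurbisPetrukhin}) to the arrow-variant rules of $\ND_{\Srfde}^\prime$ and $\ND_{\Slfde}^\prime$. Since the preceding normalisation theorem guarantees that every deduction can be brought to normal form, it suffices to establish the property for normal deductions, and I would fix such a deduction $\Pi$ of $A$ from undischarged assumptions $\Gamma$.

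First I would recall the track analysis of normal deductions. A track starts at a top formula, runs downward through the major-premise/conclusion pairs of the elimination rules and through the minor-premise/conclusion pairs of the del-rules (these continuations are precisely the segments defined above), and terminates either at the conclusion of $\Pi$ or at the major premise of a del-rule. Normality guarantees that each track decomposes as an E-part of successive eliminations, a single minimal formula, and an I-part of successive introductions: a conclusion of an introduction rule can never be, on the same track, the major premise of an elimination, so the turn from elimination to introduction occurs exactly once.

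Next I would check, rule by rule, that each inference stays inside the negation-subformula closure of its principal formula. For the eliminations, $(\wedge E_1)$, $(\wedge E_2)$ and $(\neg\neg E)$ return a subformula of the major premise, while $(\neg\vee E_1^\prime)$ and $(\neg\vee E_2^\prime)$ return the negation of one; the del-rules $(\overrightarrow{\vee E}^\prime)$, $(\overleftarrow{\vee E}^\prime)$ have major premise $A\vee B$ and the del-rules $(\overrightarrow{\neg\wedge E}^\prime)$, $(\overleftarrow{\neg\wedge E}^\prime)$ have major premise $\neg(A\wedge B)$, and in each case the discharged formulas lie among $A$, $B$, $\neg A$, $\neg B$, every one of which is a subformula of the major premise or the negation of one. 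Dually, the premises of the introduction rules $(\wedge I)$, $(\neg\neg I)$, the two disjunction-introduction rules, $(\neg\vee I^\prime)$ and the two negated-conjunction-introduction rules are subformulas of the conclusion or negations of subformulas, as witnessed by the premises $\neg A$, $\neg B$ of $(\neg\vee I^\prime)$. It is exactly these negated premises and discharge-packages that obstruct the full subformula property and make closure under negation indispensable.

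Finally I would run the induction along tracks, ordered so that the track reaching the conclusion comes first and a track reaching the major premise of a del-rule comes after the track carrying that rule's conclusion. Along each track the E-part formulas are negation-subformulas of the top formula and the I-part formulas are negation-subformulas of the end formula; the top formula is either an undischarged assumption (hence a member of $\Gamma$) or an assumption discharged lower down and thereby controlled by the major premise of the discharging del-rule, while the end formula is either the conclusion $A$ or the major premise of a del-rule covered by the induction hypothesis. The main obstacle I expect is precisely this discharge-and-transitivity bookkeeping: one must phrase the induction hypothesis as ``negation-subformula of an undischarged assumption or of the conclusion'' from the start and verify that each freshly opened assumption $\neg A$ (or $A$, $B$) in an arrow-variant del-rule traces back to its licensing major premise. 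Transitivity of the relation is available because the principal formula of every del-rule is a disjunction $A\vee B$ or a negation $\neg(A\wedge B)$, so it cannot simultaneously be a proper subformula and a negated subformula of one and the same formula, and the closure collapses correctly. With the hypothesis set up this way the remaining cases are formally identical to those of the proof for $\Sfde$; the only genuine novelty for $\Srfde$ and $\Slfde$ is the altered shape of the discharge-packages, all of which involve nothing beyond the relevant subformulas and their negations.
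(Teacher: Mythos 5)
Your proposal is correct and follows essentially the same route as the paper, which simply defers to the track-based argument for normal deductions from the K\"urbis--Petrukhin proof of the negation subformula property for $\Sfde$ and notes that it adapts to the arrow-variant rules; your rule-by-rule verification of the discharge packages of $(\overrightarrow{\vee E}^\prime)$, $(\overleftarrow{\vee E}^\prime)$, $(\overrightarrow{\neg\wedge E}^\prime)$, $(\overleftarrow{\neg\wedge E}^\prime)$ is exactly the point where the adaptation has content. The only soft spot is your justification of ``transitivity'' of the negation-subformula relation, which is not transitive in general; it suffices here only because the major premises of the del-rules have the specific shapes $A\vee B$ and $\neg(A\wedge B)$, for which the discharged formulas $A$, $B$, $\neg A$, $\neg B$ remain negation-subformulas of any formula of which the major premise is itself a negation-subformula.
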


Let us introduce ND systems for our new logics, $ \dSrfde $ and $ \dSlfde $. ND system $ \ND_{\dSrfde} $ for $ \dSrfde $ is obtained from $ \ND_{\dSfde}^\prime $ by the replacement of the rules $ (\wedge I_3) $, $ \overleftrightarrow{(\wedge E)} $, $ (\neg\vee I_3) $, and $ \overleftrightarrow{(\neg\vee E)} $ with $ \overrightarrow{(\wedge E)} $ and $ \overrightarrow{(\neg\vee E)} $. There is also an alternative option: replace the rules $ (\wedge I_3) $, $ \overleftrightarrow{(\wedge E)} $, $ (\neg\vee I_3) $, and $ \overleftrightarrow{(\neg\vee E)} $ with $ \overrightarrow{(\wedge E)^\prime} $, $ (\wedge E_1) $, $ \overrightarrow{(\neg\vee E^\prime)} $, and $ (\neg\vee E_1^\prime) $. Normalisation holds for both options. 
\begin{center}
$ \overrightarrow{(\wedge E)} $ $ \dfrac{\begin{matrix}
	& [A,B] & [A,\neg A]\\
	A \wedge B &  C &  C
	\end{matrix}}{C} $\quad
$ \overrightarrow{(\neg\vee E)} $ $ \dfrac{\begin{matrix}
	& [\neg A,\neg B] & [A,\neg A]\\
	\neg(A \vee B) &  C &  C
	\end{matrix}}{C} $\qquad
\end{center}
	\begin{center}
$ \overrightarrow{(\wedge E^\prime)} $ $ \dfrac{\begin{matrix}
	& [\neg A] & [B]\\
	A \wedge B &  C &  C
	\end{matrix}}{C} $\qquad
$ \overrightarrow{(\neg\vee E^\prime)} $ $ \dfrac{\begin{matrix}
	& [A] & [\neg B]\\
	\neg(A \vee B) &  C &  C
	\end{matrix}}{C} $\qquad	
\end{center}

ND system $ \ND_{\dSlfde} $ for $ \dSlfde $ is obtained from $ \ND_{\dSfde}^\prime $ by the replacement of the rules $ (\wedge I_2) $, $ \overleftrightarrow{(\wedge E)} $, $ (\neg\vee I_2) $, and $ \overleftrightarrow{(\neg\vee E)} $ with $ \overleftarrow{(\wedge E)} $ and $ \overleftarrow{(\neg\vee E)} $. There is also an alternative option: replace the rules $ (\wedge I_2) $, $ \overleftrightarrow{(\wedge E)} $, $ (\neg\vee I_2) $, and $ \overleftrightarrow{(\neg\vee E)} $ with $ \overleftarrow{(\wedge E^\prime)} $, $ (\wedge E_2) $, $ \overleftarrow{(\neg\vee E^\prime)} $, and $ (\neg\vee E_2^\prime) $. Normalisation holds for both options. 
\begin{center}
	$ \overleftarrow{(\wedge E)} $ $ \dfrac{\begin{matrix}
		& [A,B] & [B,\neg B]\\
		A \wedge B &  C &  C
		\end{matrix}}{C} $\quad
	$ \overleftarrow{(\neg\vee E)} $ $ \dfrac{\begin{matrix}
		& [\neg A,\neg B] & [B,\neg B]\\
		\neg(A \vee B) &  C &  C
		\end{matrix}}{C} $\qquad
\end{center}
\begin{center}
	$ \overleftarrow{(\wedge E^\prime)} $ $ \dfrac{\begin{matrix}
		& [A] & [\neg B]\\
		A \wedge B &  C &  C
		\end{matrix}}{C} $\qquad
	$ \overleftarrow{(\neg\vee E^\prime)} $ $ \dfrac{\begin{matrix}
		& [\neg A] & [B]\\
		\neg(A \vee B) &  C &  C
		\end{matrix}}{C} $\qquad	
\end{center}
\begin{theorem}
	All deductions in $ \dSrfde $ and $ \dSlfde $ can be normalised. \end{theorem}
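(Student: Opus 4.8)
The plan is to argue exactly as in the normalisation proofs for $\Sfde$ and $\dSfde$, that is, by induction on the rank $\langle d,l\rangle$ of a deduction, repeatedly applying reduction steps and permutation conversions to a topmost maximal formula or maximal segment of maximal degree until none remain. For every rule that $\ND_{\dSrfde}$ and $\ND_{\dSlfde}$ share with $\ND_{\dSfde}^\prime$ -- the disjunction rules $(\vee I_1),(\vee I_2),(\vee E)$, the negated-conjunction rules $(\neg\wedge I_1),(\neg\wedge I_2),(\neg\wedge E^\prime)$, and the double-negation rules $(\neg\neg I),(\neg\neg E)$ -- the reductions and conversions are literally the ones already used for $\dSfde$, so nothing new has to be checked there. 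The content of the proof therefore lies entirely in supplying reductions for the genuinely new elimination rules and in verifying that they lower the rank.

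First I would treat the primary rule set. For $\dSrfde$ the new del-rules are $\overrightarrow{(\wedge E)}$, with discharged assumption sets $[A,B]$ and $[A,\neg A]$, and $\overrightarrow{(\neg\vee E)}$, with $[\neg A,\neg B]$ and $[A,\neg A]$; these are matched against the two introduction rules for each principal formula, since $A\wedge B$ is introduced either by $(\wedge I)$ from $A,B$ or by $(\wedge I_2)$ from $A,\neg A$, and $\neg(A\vee B)$ either by $(\neg\vee I^\prime)$ from $\neg A,\neg B$ or by $(\neg\vee I_2)$ from $A,\neg A$. The maximal-formula reduction routes the subderivations of the premises of the introduction into the matching branch of the elimination: an $(\wedge I)/\overrightarrow{(\wedge E)}$ pair is reduced by plugging the subderivations of $A$ and $B$ into the $[A,B]$ branch, while an $(\wedge I_2)/\overrightarrow{(\wedge E)}$ pair is reduced through the $[A,\neg A]$ branch, and the two cases for $\overrightarrow{(\neg\vee E)}$ are analogous. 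For $\dSlfde$ the situation is the mirror image, with $\overleftarrow{(\wedge E)}$ and $\overleftarrow{(\neg\vee E)}$ carrying the branches $[A,B],[B,\neg B]$ and $[\neg A,\neg B],[B,\neg B]$ and the introductions $(\wedge I),(\wedge I_3),(\neg\vee I^\prime),(\neg\vee I_3)$. I would then record the permutation conversions for these del-rules, obtained verbatim from the $\Sfde$ examples by pushing the offending elimination rule into each of the two branches.

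Next I would handle the alternative rule sets, in which the single del-rule is replaced by a non-discharging elimination together with a smaller del-rule: for $\dSrfde$ these are $(\wedge E_1)$ with $\overrightarrow{(\wedge E^\prime)}$ (branches $[\neg A],[B]$) and $(\neg\vee E_1^\prime)$ with $\overrightarrow{(\neg\vee E^\prime)}$ (branches $[A],[\neg B]$), and symmetrically for $\dSlfde$. Here an $(\wedge I)$- or $(\wedge I_2)$-introduced maximal formula meeting $(\wedge E_1)$ is reduced to the subderivation of $A$, while the same maximal formula meeting $\overrightarrow{(\wedge E^\prime)}$ is routed into the $[B]$ branch (for $(\wedge I)$) or the $[\neg A]$ branch (for $(\wedge I_2)$); the negated-disjunction cases are dual. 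The one point that must actually be verified -- and where I expect the only real work to lie -- is that every such step strictly decreases the rank in the lexicographic order. As in \cite{KurbisPetrukhin}, this reduces to the observations that a maximal-formula reduction substitutes subderivations only for assumptions whose degree is strictly below that of the eliminated maximal formula, so that no maximal formula or segment of degree $\ge d$ is created, and that each permutation conversion strictly shortens the maximal segment it acts on. Because the new two-branch eliminations have exactly the del-rule shape already analysed, the segment bookkeeping is unchanged, and the induction goes through for both the primary and the alternative formulations of $\ND_{\dSrfde}$ and $\ND_{\dSlfde}$.
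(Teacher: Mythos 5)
Your proposal is correct and follows essentially the same route as the paper, which proves this theorem by the same rank induction used for $\Sfde$ and $\dSfde$ (following \cite{KurbisPetrukhin}), with the only new content being the detour reductions pairing $(\wedge I)/(\wedge I_2)$ (resp.\ $(\wedge I_3)$) and $(\neg\vee I^\prime)/(\neg\vee I_2)$ (resp.\ $(\neg\vee I_3)$) against the arrowed elimination rules, plus the corresponding permutation conversions. Your identification of the introduction--elimination matchings, including for the alternative rule sets, agrees with what the paper intends; you in fact supply more detail than the paper itself does.
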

\begin{theorem} $ \dSrfde $ and $ \dSlfde $ have the negation subformula property.	
\end{theorem}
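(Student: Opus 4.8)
The plan is to mirror the normalisation argument for $\Sfde$ (Theorem 2.1), since $\ND_{\dSrfde}$ and $\ND_{\dSlfde}$ are obtained from $\ND_{\dSfde}^\prime$ by swapping a handful of three-assumption del-rules for their two-assumption analogues, leaving the overall proof architecture intact. As in \cite{KurbisPetrukhin}, I would argue by induction on the rank $\langle d,l\rangle$ of a deduction, ordered lexicographically. If a deduction is not in normal form, it contains either a maximal formula or a maximal segment of highest degree $d$; I would select a topmost such maximal formula (or the maximal formula ending a maximal segment) and apply the appropriate reduction step, checking that this strictly lowers the rank, so that the induction closes.

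First I would set up the reduction procedures for each introduction/elimination pair newly present in these systems. For $\dSrfde$ the critical pairs are $\overrightarrow{(\wedge E)}$ against the conjunction introductions $(\wedge I)$ and $(\wedge I_2)$, and $\overrightarrow{(\neg\vee E)}$ against $(\neg\vee I^\prime)$ and $(\neg\vee I_2)$; symmetrically for $\dSlfde$ one pairs $\overleftarrow{(\wedge E)}$ with $(\wedge I),(\wedge I_3)$ and $\overleftarrow{(\neg\vee E)}$ with $(\neg\vee I^\prime),(\neg\vee I_3)$. Each reduction is the obvious analogue of the $\Sfde$ disjunction and negated-conjunction reductions already displayed: when the major premise $A\wedge B$ is introduced by $(\wedge I)$ from $A$ and $B$, the detour is replaced by grafting the subderivations of $A$ and $B$ onto the assumption-slot $[A,B]$ of the elimination rule; when it is introduced by $(\wedge I_2)$ from $A$ and $\neg A$, one grafts onto the slot $[A,\neg A]$ instead, and the remaining minor-premise branches are discarded. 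The negated-disjunction reductions are entirely parallel, using $(\neg\vee I^\prime)$ for the $[\neg A,\neg B]$ slot and $(\neg\vee I_2)$ (resp.\ $(\neg\vee I_3)$) for the $[A,\neg A]$ (resp.\ $[B,\neg B]$) slot. I would also record the permutation conversions that push a following elimination rule through each of these del-rules into its branches, exactly as in the two examples given for $\Sfde$; these are needed to handle maximal segments as opposed to isolated maximal formulas.

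Once normalisation is in hand, the negation subformula property follows by the same route as Theorems 2.2 and 3.2: in a normal deduction every maximal formula and maximal segment has been removed, so each formula occurrence lies on a track that either descends from an assumption through elimination rules or ascends to the conclusion through introduction rules, and one checks by inspecting the rule shapes that every such occurrence is a subformula of an undischarged assumption or of the conclusion, \emph{or} the negation of such a subformula. The negation clause is exactly what absorbs the side formulas $\neg A,\neg B$ introduced by the discharge schemata of $\overrightarrow{(\wedge E)}$, $\overleftarrow{(\wedge E)}$, and the negated-disjunction eliminations.

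The main obstacle I anticipate is not any single reduction but verifying that the rank genuinely decreases for the new del-rules, in particular that the two-premise variants $(\wedge I_2),(\wedge I_3),(\neg\vee I_2),(\neg\vee I_3)$ do not reintroduce a maximal formula of the same degree after a permutation conversion. Because these introductions produce a conclusion whose degree exceeds that of either premise (the premises $A,\neg A$ have strictly smaller degree than $A\wedge B$), the same degree-counting that works for $\Sfde$ applies, but I would take care that discarding the unused minor-premise subderivations cannot raise $l$ for segments elsewhere in the proof; this is the point where following \cite{KurbisPetrukhin} closely, and citing the $\Sfde$ case rather than redoing it, is essential. For the alternative formulations using $\overrightarrow{(\wedge E^\prime)}$ together with $(\wedge E_1)$ (and the $\dSlfde$ analogue), I would remark that normalisation transfers since these rule sets are interderivable with the primary ones and generate reductions of the same shape.
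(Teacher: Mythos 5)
Your proposal follows essentially the same route as the paper, which itself establishes this result by transferring the Kürbis--Petrukhin normalisation argument (rank induction with detour reductions and permutation conversions, followed by the standard track analysis on normal deductions) from $\Sfde$ to the new systems; your pairing of $(\wedge I),(\wedge I_2)$ with the branches of $\overrightarrow{(\wedge E)}$, of $(\neg\vee I^\prime),(\neg\vee I_2)$ with $\overrightarrow{(\neg\vee E)}$, and the symmetric pairings for $\dSlfde$, is exactly what the intended reductions require. Your worry about the premises $A,\neg A$ versus the conclusion $A\wedge B$ is resolved, as you suspect, by the degree convention inherited from the $\Sfde$ case (negation does not raise degree), so the argument closes as you describe.
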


\section{Completeness for $ \dSfde $, $ \dSrfde $, and $ \dSlfde $}\label{5}
\begin{theorem}
	Let $ \bf L\in\{\dSfde,\dSrfde,\dSlfde \}$. For any set of formulas $ \Gamma $ and any formula $ A $, it holds that $ \Gamma\models_{\bf L} A$ iff $ \Gamma\vdash A$ in $ \ND_{\bf L} $.
\end{theorem}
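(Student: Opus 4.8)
The plan is to prove the two directions separately. Soundness, that $\Gamma\vdash A$ implies $\Gamma\models_{\bf L}A$, goes by a routine induction on the height of the derivation: I would check that every rule of $\ND_{\bf L}$ preserves designation, i.e. that under any valuation $v$ assigning a value in $\{\true,\both\}$ to all undischarged premises the conclusion also lands in $\{\true,\both\}$. For the ordinary introduction and elimination rules this is immediate from the matrices; for a three-context del-rule one does a case split on the value of the major premise and verifies, against the relevant conjunction, disjunction and negation tables, that one of the three discharge contexts is forced to be satisfied. Since each derivation uses only finitely many assumptions, this is a finite, rule-by-rule verification.

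For completeness I would argue contrapositively via a Lindenbaum-style canonical model. The organising observation is that each Belnapian value is pinned down by two independent bits: whether a formula $B$ is \emph{told true} ($v(B)\in\{\true,\both\}$) and whether it is \emph{told false} ($v(B)\in\{\both,\false\}$, which by the negation table is the same as $v(\neg B)\in\{\true,\both\}$). In the canonical model these two bits are read off as the membership of $B$ and of $\neg B$ in a saturated set. So I would fix $A$ with $\Gamma\not\vdash A$, enumerate all formulas, and extend $\Gamma$ to a set $\Delta\supseteq\Gamma$ maximal with respect to $\Delta\not\vdash A$; such a $\Delta$ exists because derivations are finite. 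Standard arguments then give that $\Delta$ is deductively closed (if $\Delta\vdash B$ then $B\in\Delta$) and saturated ($B\notin\Delta$ implies $\Delta,B\vdash A$). I would then define the canonical valuation on atoms $p$ by the four-way split $v(p)=\true$ if $p\in\Delta,\ \neg p\notin\Delta$; $v(p)=\both$ if $p,\neg p\in\Delta$; $v(p)=\none$ if $p,\neg p\notin\Delta$; $v(p)=\false$ if $p\notin\Delta,\ \neg p\in\Delta$, and extend $v$ through the matrices.

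The heart of the proof is the truth lemma: for every $B$, $v(B)\in\{\true,\both\}$ iff $B\in\Delta$, and $v(B)\in\{\both,\false\}$ iff $\neg B\in\Delta$. I would prove this by induction on the complexity of $B$, the atomic case holding by definition. In the inductive step the introduction rules together with deductive closure yield the directions that \emph{produce} membership, while the elimination and del-rules together with saturation yield the converses. The representative argument is the del-rule one: for $\overleftrightarrow{(\neg\vee E)}$ applied to some $\neg(C\vee D)\in\Delta$, if all three discharge-conditions $\{\neg C,\neg D\}\subseteq\Delta$, $\{C,\neg C\}\subseteq\Delta$, $\{D,\neg D\}\subseteq\Delta$ failed, then each context contains a formula outside $\Delta$, so by maximality $\Delta$ together with each of the three contexts derives $A$; the del-rule then yields $\Delta\vdash A$, contradicting $\Delta\not\vdash A$. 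Hence one context-condition holds, and by the induction hypothesis it delivers exactly the required value of $C\vee D$ demanded by the matrix.

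The main obstacle I anticipate is the bookkeeping in the truth lemma: one must verify, connective by connective and value by value, that the told-true and told-false conditions read off the matrices coincide precisely with the intro/del-rule patterns, and this must be redone for each of the three logics, since their conjunction, disjunction and negated-disjunction tables differ (the infectiousness of $\both$ in $\dSfde$ against the left- and right-regular tables of $\dSrfde$ and $\dSlfde$, whose del-rules $\overrightarrow{(\wedge E)}$, $\overleftarrow{(\wedge E)}$, $\overrightarrow{(\neg\vee E)}$, $\overleftarrow{(\neg\vee E)}$ correspondingly drop or reshape a discharge context). Once the truth lemma is established, completeness is immediate: $\Delta\supseteq\Gamma$ forces $v(\gamma)\in\{\true,\both\}$ for all $\gamma\in\Gamma$, while $A\notin\Delta$ forces $v(A)\notin\{\true,\both\}$, so $\Gamma\not\models_{\bf L}A$. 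I would carry out the details in full for $\dSrfde$ and note that $\dSfde$ and $\dSlfde$ follow from the same template with the matrices and rules substituted.
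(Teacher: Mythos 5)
Your proposal is correct and follows essentially the same route as the paper: a routine rule-by-rule soundness induction, then a Henkin/Lindenbaum-style completeness argument in which $\Gamma$ is extended to a maximal non-deriving set, a canonical four-valued valuation is read off from membership of $B$ and $\neg B$, and a truth lemma is proved by induction on formula complexity (your two-bit biconditional is exactly the paper's elementhood function $e(A,\Gamma)$, and your derivation of the closure properties from saturation plus the del-rules is how the paper's notion of $\dSrfde$-theory and its Lindenbaum lemma are obtained). The only difference is organisational: the paper packages those closure conditions into the definition of a theory before the truth lemma, whereas you establish them on the fly.
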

\begin{proof}
The soundness part of this theorem is by the induction on the length of deduction (before that one should check that all the rules are sound which is a routine exercise). The completeness part is by the Henkin-style argument in the style of Kooi and Tamminga \cite{KooiTamminga}. As an example, we show a proof for $ \dSrfde $. 

\begin{defn}\label{theory}
	We say that a set of formulas $ \Gamma $ is a $ \dSrfde $-theory iff $ \Gamma $ is not equal to the set of all formulas, is closed under $ \vdash $ (i.e. for any formula $ A $, if $ \Gamma\vdash A$, then $ A\in\Gamma $), and has the following properties, for any formulas $ A $ and $ B $:
\begin{itemize}\itemsep=0pt
\item if $ A\vee B\in\Gamma$, then $ A\in\Gamma$ or $ B\in\Gamma $,
\item if $ A\wedge B\in\Gamma $, then $ A,B\in\Gamma $ or $ A,\neg A\in\Gamma$,
\item if $ \neg(A\wedge B)\in\Gamma$, then $ \neg A\in\Gamma$ or $ \neg B\in\Gamma $,
\item if $ \neg(A\vee B)\in\Gamma $, then $ \neg A,\neg B\in\Gamma $ or $ A,\neg A\in\Gamma$.	
\end{itemize}
\end{defn}

\begin{defn}\label{elementhood}
	For any set of formulas $ \Gamma $ and any formula $ A $, we define the notion of $ A $'s elementhood in $ \Gamma $ (we follows Kooi and Tamminga's terminology \cite{KooiTamminga}) as follows:
     $$
     e(A,\Gamma)=\left\{\begin{array}{cll}
     \none &\hbox{~iff~} A \not\in\Gamma,\neg A\not\in\Gamma;\\
     \false &\hbox{~iff~} A \not\in\Gamma,\neg A\in\Gamma;\\
     \true &\hbox{~iff~} A\in\Gamma, \neg A \not\in\Gamma;\\
     \both &\hbox{~iff~} A \in\Gamma,\neg A\in\Gamma.\\
     \end{array}\right.
     $$ 
\end{defn} 
     
\begin{lemma}\label{lemma1}
For any $ \dSrfde $-theory $ \Gamma $ and any formulas $ A $ and $ B $, it holds that\textup{:}
\begin{enumerate}\itemsep=0pt
\item $\neg e(A, \Gamma) = e(\neg A, \Gamma)$;
\item $e(A, \Gamma)\vee e(B, \Gamma) = e(A \vee B, \Gamma)$;
\item $e(A, \Gamma)\wedge e(B, \Gamma) = e(A \wedge B, \Gamma)$.	
\end{enumerate}	
\end{lemma}

\begin{proof}
1. See \cite[Theorem 3.5]{Petrukhin}.
	
2.  Assume that $ e(A,\Gamma)=\true $ and $ e(B,\Gamma)=\both $. Then, by Definition \ref{elementhood}, $ A\in\Gamma $, $\neg A\not\in\Gamma $, $ B\in\Gamma $, and $ \neg B\in\Gamma $. By the rule $ (\vee I_1) $, $ A\vee B\in\Gamma$. Suppose that  $ \neg(A\vee B)\in\Gamma$. Then, since $ \Gamma $ is a $ \dSrfde $-theory, $ \neg A,\neg B\in\Gamma $ or $ A,\neg A\in\Gamma$. Since $ \neg A\not\in\Gamma$, both conditions are not fulfilled and we obtain contradiction. Hence, $ \neg(A\vee B)\not\in\Gamma$. Thus, by Definition \ref{elementhood}, $ e(A\vee B,\Gamma)=\true $. Therefore, $ e(A \vee B, \Gamma)  = \true = \true\vee \both = e(A, \Gamma)\vee e(B, \Gamma)$.

Assume that $ e(A,\Gamma)=\both $ and $ e(B,\Gamma)\in\{\true,\both\} $. Then $ A\in\Gamma $ and $\neg A\in\Gamma $. By the rule $ (\vee I_1) $, $ A\vee B\in\Gamma$. By the rule $ (\neg\vee I_2) $, $ \neg(A\vee B)\in\Gamma$. Thus, $ e(A\vee B,\Gamma)=\both $.

Assume that $ e(A,\Gamma)=\false $ and $ e(B,\Gamma)=\none $. Then $ A\not\in\Gamma $, $\neg A\in\Gamma $, $ B\not\in\Gamma $, and $ \neg B\not\in\Gamma $. If $ A\vee B\in\Gamma$, then $ A\in\Gamma $ or $ B\in\Gamma $. Contradiction. $ A\vee B\not\in\Gamma$. If $ \neg(A\vee B)\in\Gamma$, then $ \neg A,\neg B\in\Gamma $ or $ A,\neg A\in\Gamma$. Since $ A\not\in\Gamma $ and $ \neg B\not\in\Gamma $, we get contradiction. Hence, $ \neg(A\vee B)\not\in\Gamma$. Thus, $ e(A\vee B,\Gamma)=\none $.

The other cases are considered similarly. 

3. Assume that $ e(A,\Gamma)=\true $ and $ e(B,\Gamma)=\both $. By $ (\wedge I) $, $ A\wedge B\in\Gamma$. By $ (\neg\wedge I_2) $, $ \neg(A\wedge B)\in\Gamma $. Thus, $ e(A\wedge B,\Gamma)=\both $. 

Assume that $ e(A,\Gamma)=\both $ and $ e(B,\Gamma)\in\{\true,\both\} $. Then $ A\in\Gamma $, $\neg A\in\Gamma $, and $ B\in\Gamma $.  By $ (\wedge I) $, $ A\wedge B\in\Gamma$. By the rule $ (\neg\wedge I_1) $, $ \neg(A\wedge B)\in\Gamma$. Thus, $ e(A\wedge B,\Gamma)=\both $.

Assume that $ e(A,\Gamma)=\false $ and $ e(B,\Gamma)=\none $. If $ A\wedge B\in\Gamma$, then $ A,B\in\Gamma $ or $ A,\neg A\in\Gamma$. Since $ A\not\in\Gamma $, we obtain that $ A\wedge B\not\in\Gamma$. By the rule $ (\neg\wedge I_1) $, $ \neg(A\wedge B)\in\Gamma $. Hence, $ e(A\wedge B,\Gamma)=\false $. 

The other cases are considered similarly.
\end{proof}

  \begin{lemma} \label{lemma2}
  	Let $ \Gamma $ be an arbitrary $ \bf\dSrfde $-theory and $ v_{\Gamma} $ be an arbitrary valuation such that for any propositional variable $ p ,$ $ v_{\Gamma}(p) = e(p, \Gamma) $. Then, for any formula $ A $, it holds that $ v_{\Gamma}(A) = e(A, \Gamma) $.
  \end{lemma}
  
  \begin{proof} By a structural induction on formula $ A $ using the Lemma \ref{lemma1}.
  \end{proof}

\begin{lemma}[Lindenbaum]\label{lemma3}
For any set of formulas $ \Gamma $ and any formula $ A $, if $ \Gamma\not\vdash_{\dSrfde}A $, then there is a $ \dSrfde $-theory $ \Delta $ such that $ \Gamma\subseteq\Delta $ and $ \Delta\not\vdash_{\dSrfde}A $.
\end{lemma} 
\begin{proof}
This can be  proven by the standard methods (see, e.g. \cite[Lemma 3.8]{KooiTamminga}).	
\end{proof}

Suppose that $ \Gamma\not\vdash_{\dSrfde}A $. By Lemma \ref{lemma3} this implies an existence of a $ \dSrfde $-theory $ \Delta $ such that $ \Gamma\subseteq\Delta $ and $ \Delta\not\vdash_{\dSrfde}A $.
By Lemma \ref{lemma2}, we obtain that $ e(B,\Delta)\in\{\true,\both \} $ for any $ B\in\Gamma $, 
  while $ e(A,\Delta)\not\in\{\true,\both \} $, i.e. $ \Gamma\not\models_{\dSrfde} A$.
\end{proof}
\section{Three-valued extensions of the four-valued logics in question}\label{4}

In \cite{Petrukhin3} ND systems for $ \bf K_3^\rightarrow$, $ \bf K_3^\leftarrow$, $ \bf K_3^w$, $ \bf K_3^{\rightarrow 2}$, $ \bf K_3^{\leftarrow 2}$,  and \textbf{PWK} are offered. We show that these logics can be formalised as extensions of $ \Srfde $, $ \Slfde $, $ \Sfde $, $ \dSrfde $, $ \dSlfde $,  and $ \dSfde $, respectively.  
As for the semantics for these logics, matrices $ \bf K_3^\rightarrow$, $ \bf K_3^\leftarrow$, $ \bf K_3^w$ are $ \{\true,\none,\false \} $-restrictions of the matrices for $ \Srfde $, $ \Slfde $, and $ \Sfde $, respectively. Matrices $ \bf K_3^{\rightarrow 2}$, $ \bf K_3^{\leftarrow 2}$, $ \bf PWK$ are $ \{\true,\both,\false \} $-restrictions of the matrices for $ \dSrfde $, $ \dSlfde $, and $ \dSfde $, respectively.  
Let us start with ND systems from \cite{Petrukhin3}. Consider the following rules:
\begin{center}
	(EFQ)	$ \dfrac{A\quad \neg A}{B} $\qquad
	(EM) $ \dfrac{\begin{matrix}
		[A] & [\neg A] \\
		B &  B 
		\end{matrix}}{B} $	\qquad
	$ (\overrightarrow{\neg\wedge I_2}) $	$ \dfrac{A\wedge \neg B}{\neg(A\wedge B)} $\qquad
	$ (\overleftarrow{\neg\wedge I_1}) $	$ \dfrac{\neg A\wedge B}{\neg(A\wedge B)} $\qquad
\end{center}
\begin{center}
	$(\overrightarrow{\neg\vee I})$ $ \dfrac{A\wedge\neg A}{\neg(A\vee B)} $\qquad
	$(\overrightarrow{\neg\vee E})$ $ \dfrac{\neg(A\vee B)}{A\vee\neg B} $\qquad
	$(\overleftarrow{\neg\vee I})$ $ \dfrac{B\wedge\neg B}{\neg(A\vee B)} $\qquad
	$(\overleftarrow{\neg\vee E})$ $ \dfrac{\neg(A\vee B)}{\neg A\vee B} $
\end{center}
 \begin{itemize}
\item	ND system $ \ND_{\bf K_3^\rightarrow} $ for $ \bf K_3^\rightarrow $ is obtained from $ \ND_{\bf FDE} $ by the replacement of the rules $ (\vee I_2) $, $ (\vee E) $, and $ (\neg\wedge I) $ with (EFQ), $ \overleftrightarrow{(\vee I_2)} $, $ \overrightarrow{(\vee E)} $, $ (\neg\wedge I_1) $, $ (\overrightarrow{\neg\wedge I_2}) $.
\item ND system $ \ND_{\bf K_3^\leftarrow} $ for $ \bf K_3^\leftarrow $ is obtained from $ \ND_{\bf FDE} $ by the replacement of the rules $ (\vee I_1) $, $ (\vee E) $, and $ (\neg\wedge I) $ with (EFQ), $ \overleftrightarrow{(\vee I_1)} $, $ \overrightarrow{(\vee E)} $, $ (\neg\wedge I_2) $, $ (\overleftarrow{\neg\wedge I_1}) $.
\item  ND system $ \ND_{\bf K_3^w} $ for $ \bf K_3^w $ is obtained from $ \ND_{\bf FDE} $ by the replacement of the rules $ (\vee I_1) $, $ (\vee I_2) $, $ (\vee E) $ with (EM),  $(\overleftrightarrow{\vee I_1})$, $(\overleftrightarrow{\vee I_2})$, $(\overleftrightarrow{\vee I_3})$,  and $(\overleftrightarrow{\vee E})$. 
\item ND system $ \ND_{\bf K_3^{\rightarrow 2}} $ for $ \bf K_3^{\rightarrow 2} $ is obtained from $ \ND_{\bf FDE} $ by the replacement of the rule $ (\wedge E_2) $ with (EM), $ (\wedge I_2) $, $(\overleftrightarrow{\wedge E_2})$, $(\overrightarrow{\neg\vee I})$, and $(\overrightarrow{\neg\vee E})$. 
\item ND system $ \ND_{\bf K_3^{\leftarrow 2}} $ for $ \bf K_3^{\leftarrow 2} $ is obtained from $ \ND_{\bf FDE} $ by the replacement of the rule $ (\wedge E_1) $ with (EM), $ (\wedge I_3) $, $(\overleftrightarrow{\wedge E_3})$, $(\overleftarrow{\neg\vee I})$, and $(\overleftarrow{\neg\vee E})$.
\item ND system $ \ND_{\bf PWK} $ for $ \bf PWK $ is obtained from $ \ND_{\bf FDE} $ by the replacement of the rules $ (\wedge E_1) $ and $ (\wedge E_2) $ with (EM), $ \overleftrightarrow{(\vee I_3)} $, $ (\wedge I_2) $, $(\overleftrightarrow{\wedge E_2})$, $(\overrightarrow{\neg\vee I})$,  $(\overrightarrow{\neg\vee E})$ $ (\wedge I_3) $, $(\overleftrightarrow{\wedge E_3})$, $(\overleftarrow{\neg\vee I})$, and $(\overleftarrow{\neg\vee E})$.
\end{itemize} 

Let us formulate new ND systems for the three-valued logics in question. \begin{itemize}
\item	ND system $ \ND_{\bf K_3^\rightarrow}^\prime $ (resp. $ \ND_{\bf K_3^\leftarrow}^\prime $, $ \ND_{\bf K_3^w}^\prime$) for $ \bf K_3^\rightarrow $ (resp. $ \bf K_3^\leftarrow $, $ \bf K_3^w $) is an extension of $ \ND_{\Srfde} $ (resp. $ \ND_{\Slfde} $, $ \ND_{\Sfde}^\prime $) by the rule (EFQ).
\item  ND system $ \ND_{\bf K_3^{\rightarrow 2}}^\prime $ (resp. $ \ND_{\bf K_3^{\leftarrow 2}}^\prime $, $ \ND_{\bf PWK}^\prime $) for $ \bf K_3^{\rightarrow 2} $ (resp. $ \bf K_3^{\leftarrow 2} $, $ \bf PWK$) is an extension of $ \ND_{\dSrfde} $ (resp. $ \ND_{\dSlfde} $, $ \ND_{\dSfde}^\prime $) by the rule (EM).
\end{itemize} 

\begin{proposition}
Let $ \bf L\in\{K_3^\rightarrow,K_3^\leftarrow,K_3^{\rightarrow 2},K_3^{\leftarrow 2},K_3^w,PWK \}$. 
For any set of formulas $ \Gamma $ and any formula $ A $, $ \Gamma\vdash A$ in $ \ND_{\bf L} $ iff $ \Gamma\vdash A$ in $ \ND^\prime_{\bf L} $.	
\end{proposition}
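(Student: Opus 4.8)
The plan is to deduce the biconditional from the fact that $\ND_{\bf L}$ and $\ND^\prime_{\bf L}$ have mutually derivable rules, followed by the routine induction on the length of derivations, exactly as in the equivalence theorem for $\ND_{\dSfde}^B$ and $\ND_{\dSfde}^\prime$ established above. For the left-to-right inclusion I would check that every primitive rule of $\ND_{\bf L}$ is a derived rule of $\ND^\prime_{\bf L}$: for the non-del rules this means producing an $\ND^\prime_{\bf L}$-derivation of the conclusion from the premises, and for a del-rule it means producing a derivation of the minor conclusion $C$ from the major premise together with the minor-premise subderivations, discharging exactly the same assumption packages. One then rewrites a given $\ND_{\bf L}$-derivation into an $\ND^\prime_{\bf L}$-derivation by replacing each rule application by its derived-rule witness; the right-to-left inclusion is symmetric. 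This yields $\Gamma\vdash A$ in $\ND_{\bf L}$ iff $\Gamma\vdash A$ in $\ND^\prime_{\bf L}$.

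What keeps the argument finite is that, for each $\bf L$, the two systems agree on almost all of their rules, and in particular on the characteristic non-standard rule --- $(\mathrm{EFQ})$ for the one-designated logics $\bf K_3^\rightarrow$, $\bf K_3^\leftarrow$, $\bf K_3^w$ and $(\mathrm{EM})$ for the two-designated logics $\bf K_3^{\rightarrow 2}$, $\bf K_3^{\leftarrow 2}$, $\bf PWK$ --- so only a few rules per logic actually differ. For instance, for $\bf K_3^\rightarrow$ the sole discrepancy is that $\ND_{\bf K_3^\rightarrow}$ carries $(\neg\wedge I_1)$ and $(\overrightarrow{\neg\wedge I_2})$ where $\ND^\prime_{\bf K_3^\rightarrow}$ carries $(\neg\wedge I)$. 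Here $(\neg\wedge I_1)$ is recovered from $\neg A$ by $(\vee I_1)$ and then $(\neg\wedge I)$; $(\overrightarrow{\neg\wedge I_2})$ is recovered from $A\wedge\neg B$ by extracting $A$ and $\neg B$ via $(\wedge E_1)$ and $(\wedge E_2)$, lifting $A$ to $\neg\neg A$ via $(\neg\neg I)$, reassembling $\neg\neg A\wedge\neg B$ by $(\wedge I)$, and applying $(\overleftrightarrow{\vee I_2})$ and $(\neg\wedge I)$; conversely $(\neg\wedge I)$ is recovered from $\neg A\vee\neg B$ by the del-rule $\overrightarrow{(\vee E)}$, using $(\neg\wedge I_1)$ in the $[\neg A]$-branch and $(\overrightarrow{\neg\wedge I_2})$ (after $(\neg\neg E)$ and $(\wedge I)$) in the $[\neg\neg A\wedge\neg B]$-branch. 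This repackaging idiom --- moving between comma-separated and conjoined premise forms with $(\wedge I)$, $(\wedge E_1)$, $(\wedge E_2)$, $(\neg\neg I)$, $(\neg\neg E)$, and dispatching disjunctive premises through the appropriate del-rule --- recurs for all six logics.

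I expect the main obstacle to sit with the two-designated-value logics $\bf K_3^{\rightarrow 2}$, $\bf K_3^{\leftarrow 2}$, and $\bf PWK$. Their primed systems are built over the four-valued bases $\dSrfde$, $\dSlfde$, $\dSfde$, whose del-rules branch only over the cases realisable in the four-valued matrix, which still admits the gap value $\none$, whereas rules of the $\ND_{\bf L}$ formulation such as $(\overleftrightarrow{\wedge E_2})$, $(\overrightarrow{\neg\vee I})$, and $(\overrightarrow{\neg\vee E})$ encode inferences that become correct only once $\none$ is excluded. Simulating these will force an appeal to $(\mathrm{EM})$ to supply the absent case inside the del-rule branches; dually, in the one-designated logics it is $(\mathrm{EFQ})$ that discharges the missing glut case $\both$. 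Keeping the assumption discharge correct when a del-rule of one system is mimicked by a del-rule of the other wrapped with $(\wedge I)$ and $(\wedge E_i)$ is the fiddly, though entirely routine, bookkeeping.

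As a cross-check, the statement can also be obtained semantically. The systems $\ND_{\bf L}$ are sound and complete for $\bf L$ by \cite{Petrukhin3}, and the Henkin-style argument of Section \ref{5} adapts along the same lines to $\ND^\prime_{\bf L}$: $(\mathrm{EFQ})$ forces the canonical theories to be glut-free and $(\mathrm{EM})$ forces them gap-free, so that $e(\cdot,\Gamma)$ avoids $\both$ in the first group and $\none$ in the second and hence ranges over exactly the three values of $\bf L$. Together with the routine soundness of $\ND^\prime_{\bf L}$, the equivalence then follows from $\Gamma\vdash_{\ND_{\bf L}}A$ iff $\Gamma\models_{\bf L}A$ iff $\Gamma\vdash_{\ND^\prime_{\bf L}}A$.
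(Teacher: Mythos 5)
Your proposal is correct and is exactly the argument the paper intends: the paper's own proof is just ``Left for the reader,'' with the analogous equivalence for $\ND_{\dSfde}^B$ and $\ND_{\dSfde}^\prime$ proved ``by induction on the length of the derivation,'' i.e.\ precisely your mutual-rule-derivability induction. Your worked example for $\bf K_3^\rightarrow$ (interderiving $(\neg\wedge I_1)$, $(\overrightarrow{\neg\wedge I_2})$ and $(\neg\wedge I)$ via $(\vee I_1)$, $(\neg\neg I)$, $(\overleftrightarrow{\vee I_2})$ and $\overrightarrow{(\vee E)}$) checks out, and the semantic detour via soundness/completeness is a legitimate, if heavier, alternative.
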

\begin{proof}
	Left for the reader.
\end{proof}


In \cite{Belikov} it is mentioned that three-valued logics $ \bf K_3^w$ and \textbf{PWK} can be formalised as extensions of $ \ND_{\Sfde}^B $ and $ \ND_{\dSfde}^B $, respectively, by the rules (EFQ) and (EM). However, the problems with $ (\vee C) $ and $ (\wedge C) $ are the same. Since our systems do not have these rules, we avoid these problems. Again, the paper \cite{KurbisPetrukhin} helps us, since there normalisation for $ \bf K_3$ and \textbf{LP} was proven (these logics extend \textbf{FDE} by (EFQ) and (EM), respectively). Let us show that the applications of (EFQ) can be restricted to propositional variables and their negations. Here are some examples, 
conjunction (the rule $ (\wedge I) $) and negated conjunction (the rule $ \overleftrightarrow{(\neg\wedge I_1^\prime)} $):
\begin{center}
	\EnableBpAbbreviations
	\AXC{$\Sigma_1$}
	\noLine\UIC{$A$}
	\AXC{$\Sigma_2$}
	\noLine\UIC{$\neg A$}
	\BIC{$B_1\wedge B_2$}
	\noLine\UIC{$\Xi$}
	\DisplayProof
	\; $\leadsto$ \;
	\AXC{$\Sigma_1$}
	\noLine\UIC{$A$}
	\AXC{$\Sigma_2$}
	\noLine\UIC{$\neg A$}
	\BIC{$B_1$}
	\AXC{$\Sigma_1$}
	\noLine\UIC{$A$}
	\AXC{$\Sigma_2$}
	\noLine\UIC{$\neg A$}
	\BIC{$B_2$}	
	\BIC{$B_1\wedge B_2$}
	\noLine\UIC{$\Xi$}
	\DisplayProof$ \qquad $
	\EnableBpAbbreviations
	\AXC{$\Sigma_1$}
	\noLine\UIC{$A$}
	\AXC{$\Sigma_2$}
	\noLine\UIC{$\neg A$}
	\BIC{$\neg(B_1\wedge B_2)$}
	\noLine\UIC{$\Xi$}
	\DisplayProof
	\; $\leadsto$ \;
	\AXC{$\Sigma_1$}
	\noLine\UIC{$A$}
	\AXC{$\Sigma_2$}
	\noLine\UIC{$\neg A$}
	\BIC{$\neg B_1$}
	\AXC{$\Sigma_1$}
	\noLine\UIC{$A$}
	\AXC{$\Sigma_2$}
	\noLine\UIC{$\neg A$}
	\BIC{$B_2$}	
	\BIC{$\neg(B_1\wedge B_2)$}
	\noLine\UIC{$\Xi$}
	\DisplayProof
\end{center}

For the case of logics with (EM) we need the following definition. 
\begin{defn}\cite[Definition 13]{KurbisPetrukhin}
	A deduction is (EM)-final if and only if there is a number of segments all of which are constituted by a sequence of formulas $C_1\ldots C_n$ such that 
	
	\noindent (i) for some $i$, $1\leq i< n$, $C_i$ is the minor premise and not the conclusion of (EM); 
	
	\noindent (ii) there are no applications of (EM) above $C_i$;
	
	\noindent (iii) for all $j$, $i\leq j<n$, $C_j$ is the minor premise of (EM) and $C_{j+1}$ is the conclusion of (EM); 
	
	\noindent (iv) $C_n$ is the conclusion of the deduction.  		
\end{defn}

\begin{lemma}\label{Lemma1}
	Any deduction in $ \bf K_3^{\rightarrow 2}$, $ \bf K_3^{\leftarrow 2} $ or \textbf{PWK} in which (EM) is applied can be transformed into one that is (EM)-final.	
\end{lemma}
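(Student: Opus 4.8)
The plan is to adapt the (EM)-permutation argument of \cite[Lemma 12]{KurbisPetrukhin} to the present systems. The statement says that any deduction in $\bf K_3^{\rightarrow 2}$, $\bf K_3^{\leftarrow 2}$, or \textbf{PWK} using (EM) can be rewritten as an (EM)-final one, i.e. one in which every application of (EM) sits at the very bottom, organised into the segments described in the preceding definition. The strategy is to push applications of (EM) downward through the deduction, permuting them past whatever rule is applied immediately below their conclusion, until the only remaining (EM)-applications terminate in the conclusion of the whole deduction.

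First I would set up the induction. The natural measure is the number of (EM)-applications that are \emph{not} yet in (EM)-final position, together with the total height of material below the lowest such application; I would do an induction on this measure (or, as in \cite{KurbisPetrukhin}, on the sum over all non-final (EM)-applications of the length of the deduction below them). For the inductive step I would pick a topmost (EM)-application whose conclusion $B$ is the premise of some rule $R$ applied below it, where $R$ is not itself an (EM) that keeps us inside an already-final segment. The key move is the permutation: an application of (EM) discharging $[A]$ and $[\neg A]$ with conclusion $B$, followed by $R$ deriving $B'$ from $B$ (and possibly side premises), is replaced by first applying $R$ inside each of the two branches of (EM) to obtain $B'$ from the two copies of $B$, and then applying (EM) with conclusion $B'$. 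This swap strictly lowers the measure because the permuted (EM) now stands one rule-application closer to the conclusion.

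The step I expect to be the main obstacle is handling all the shapes of $R$ uniformly, in particular the del-rules of these systems — $(\vee E)$, $\overleftrightarrow{(\wedge E)}$, $\overleftrightarrow{(\neg\vee E)}$ and their directed variants — where (EM)'s conclusion $B$ may be the \emph{major} premise rather than a minor premise, and where the side derivations carry discharged assumptions that must be duplicated correctly when (EM) is pushed below them. I would treat these exactly as in \cite{KurbisPetrukhin}: when $B$ is a minor premise of a del-rule, the permutation is immediate and matches the segment structure demanded by clauses (i)--(iii) of the definition; when $B$ is the major premise, one first appeals to the ordinary permutation conversions (as illustrated earlier in the excerpt) to move the del-rule past (EM). The only genuinely new rules relative to \textbf{LP} are the extra introduction rules $(\wedge I_2)$, $(\wedge I_3)$, $(\neg\vee I_2)$, $(\neg\vee I_3)$ and the directed elimination rules specific to $\dSrfde$ and $\dSlfde$; for each of these the permutation is a single-premise or two-premise swap exactly analogous to the conjunction and negated-conjunction examples displayed just above the statement, so no essentially new case arises.

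Finally I would note that the procedure terminates because each permutation strictly decreases the chosen measure, and that when the measure reaches its minimum every (EM)-application is either absent or lies in a segment satisfying clauses (i)--(iv), so the resulting deduction is (EM)-final by definition. Since the extra rules of $\bf K_3^{\rightarrow 2}$, $\bf K_3^{\leftarrow 2}$, and \textbf{PWK} over their four-valued bases $\dSrfde$, $\dSlfde$, $\dSfde$ are all introduction and elimination rules in Gentzen's sense (this is precisely why we discarded Belikov's $(\wedge C)$), every rule below an (EM)-conclusion can be permuted, and the argument of \cite{KurbisPetrukhin} goes through verbatim for the three logics in question.
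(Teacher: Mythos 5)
Your overall strategy---pushing applications of (EM) downward past the rules below them until every (EM) sits at the bottom of the deduction---is the same idea as the paper's proof, but the two arguments are organised differently, and the difference matters for termination. The paper does not permute (EM) past one rule at a time: it applies a single wholesale transformation that takes the entire subdeduction $\Sigma$ standing below the conclusion $C$ of an (EM)-application and grafts a copy of $\Sigma$ onto each of the two branches, so that this (EM) immediately concludes the end formula $D$; and it processes the (EM)-applications \emph{from the lowest one upward}, which guarantees that the duplicated part $\Sigma$ contains no further (EM)-applications and hence that the total number of (EM)-applications never grows. Your version instead picks a \emph{topmost} non-final (EM) and permutes it down one rule $R$ at a time, with a measure based on the amount of material below the non-final (EM)-applications. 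The gap is here: when $R$ has side premises (e.g.\ $(\wedge I)$, or the minor-premise branches of a del-rule), permuting (EM) below $R$ duplicates the derivations of those side premises, and nothing prevents \emph{those} derivations from containing (EM)-applications of their own---they lie beside, not above, the (EM) you chose. Such a step can increase both the number of non-final (EM)-applications and your proposed measure, so the claim that each permutation ``strictly lowers the measure'' does not hold as stated. This is repairable (switch to the bottom-up order, or to the one-shot duplication of everything below, as the paper does), but as written the termination argument is not sound. A smaller point: when the (EM)-conclusion is the major premise of an elimination rule, no separate preliminary appeal to the permutation conversions is needed---the wholesale duplication (or, in your setting, the same branch-copying swap) already covers that case, since (EM) behaves exactly like a del-rule with no major premise.
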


\begin{proof}
	Similarly to \cite[Lemma 1]{KurbisPetrukhin}.	By repeated application of the following transformation: 
	
	\begin{center} 
		\AxiomC{$[B]$}
		\noLine
		\UnaryInfC{$\Pi_1$}
		\noLine
		\UnaryInfC{$C$}
		\AxiomC{$[\neg B]$}
		\noLine
		\UnaryInfC{$\Pi_2$}
		\noLine
		\UnaryInfC{$C$}
		\BinaryInfC{$C$}
		\noLine
		\UnaryInfC{$\Sigma$}
		\noLine
		\UnaryInfC{$D$}
		\DisplayProof\quad$\leadsto$\quad
		\AxiomC{$[B]$}
		\noLine
		\UnaryInfC{$\Pi_1$}
		\noLine
		\UnaryInfC{$C$}
		\noLine
		\UnaryInfC{$\Sigma$}
		\noLine
		\UnaryInfC{$D$}
		\AxiomC{$[\neg B]$}
		\noLine
		\UnaryInfC{$\Pi_2$}
		\noLine
		\UnaryInfC{$C$}
		\noLine
		\UnaryInfC{$\Sigma$}
		\noLine
		\UnaryInfC{$D$}
		\BinaryInfC{$D$}
		\DisplayProof
	\end{center}
	Begin with an application of (EM) lowest down in the deduction and work your way up.
\end{proof}

\begin{theorem}
	\begin{itemize}
		\item	All deductions in $ \bf K_3^\rightarrow$,  $\bf K_3^\leftarrow$,  $\bf K_3^{\rightarrow 2}$, $\bf K_3^{\leftarrow 2}$, $ \bf K_3^w$, and \textbf{PWK} can be normalised. 
		\item $ \bf K_3^\rightarrow$,  $\bf K_3^\leftarrow$,  $\bf K_3^{\rightarrow 2}$,  $\bf K_3^{\leftarrow 2}$, $ \bf K_3^w$, and \textbf{PWK} have the negation subformula property.
	\end{itemize}	
\end{theorem}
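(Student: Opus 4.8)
The plan is to prove both claims together by leveraging the normalisation results already established for the four-valued base logics $\Srfde$, $\Slfde$, $\Sfde$, $\dSrfde$, $\dSlfde$, and $\dSfde$ (Theorems from Sections~\ref{2} and~\ref{3}), and by importing the machinery that \cite{KurbisPetrukhin} used to handle $\bf K_3$ and $\bf LP$ as extensions of $\bf FDE$ by (EFQ) and (EM), respectively. By the Proposition above, it suffices to work with the systems $\ND^\prime_{\bf L}$, since these are deductively equivalent to the systems $\ND_{\bf L}$ of \cite{Petrukhin3}. The three logics $\bf K_3^\rightarrow$, $\bf K_3^\leftarrow$, $\bf K_3^w$ are the respective base logics extended by (EFQ), while $\bf K_3^{\rightarrow 2}$, $\bf K_3^{\leftarrow 2}$, $\bf PWK$ are extended by (EM); so the task is to show that adding one new rule does not break normalisation or the negation subformula property.

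For the three logics with (EFQ), the first step is to observe that (EFQ) is an elimination-style rule with no corresponding introduction rule, so it creates no new maximal formulas or segments on its own; the existing reduction procedures for the base logic therefore continue to apply verbatim. The key additional move, already illustrated in the excerpt for $(\wedge I)$ and $\overleftrightarrow{(\neg\wedge I_1^\prime)}$, is to \emph{restrict} applications of (EFQ) so that its conclusion is always a propositional variable or the negation of one. I would push every (EFQ) application upward and split it across the structure of its conclusion, using one reduction per principal connective (conjunction, disjunction, negated conjunction, negated disjunction, double negation), duplicating the contradictory pair $A,\neg A$ above each immediate subformula. Once (EFQ) only ever introduces atomic formulas (or their negations), these formulas are vacuously subformulas-or-negations-of-subformulas, and the negation subformula property is preserved; normalisation then follows by the same induction on rank as for the base logic, exactly as $\bf K_3$ is treated relative to $\bf FDE$ in \cite{KurbisPetrukhin}.

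For the three logics with (EM), I would first invoke Lemma~\ref{Lemma1} to transform any deduction into (EM)-final form, so that every application of (EM) occurs in a final segment leading to the conclusion and nothing lies above the topmost (EM) in each branch except (EM)-free subdeductions. This confines the discharged assumptions $A$ and $\neg A$ introduced by (EM) to positions where they are subformulas-or-negations of the end formula, which is precisely what the negation subformula property requires, and it mirrors the treatment of $\bf LP$ in \cite{KurbisPetrukhin}. Inside each (EM)-free subdeduction the base-logic normalisation and reduction steps apply unchanged; the permutation conversions for (EM) behave like those for the del-rules. Combining the (EM)-final transformation with the base normalisation then yields a normal form, and the subformula analysis gives the negation subformula property.

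The main obstacle, as with the $\bf FDE$-based results of \cite{KurbisPetrukhin}, is the bookkeeping around the del-rules (the various elimination rules with discharged assumptions such as $\overleftrightarrow{(\vee E^\prime)}$, $\overrightarrow{(\wedge E)}$, $\overleftarrow{(\neg\vee E)}$, and their relatives) interacting with (EFQ) or (EM): one must verify that permutation conversions terminate and that pushing (EFQ) up past a del-rule, or making a deduction (EM)-final, does not create fresh maximal segments of higher degree. This requires checking that the rank strictly decreases (or the relevant measure decreases lexicographically) in each reduction, which is routine but case-heavy. Since each of the six logics differs from its four-valued base only by the single added rule, and since the base normalisation proofs already dispatch the genuinely new connective-specific reductions (Sections~\ref{2} and~\ref{3}), the proof reduces to checking these interaction cases, which follow the pattern of \cite[Theorem~1, Theorem~2]{KurbisPetrukhin} together with Lemma~\ref{Lemma1}.
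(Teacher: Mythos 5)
Your proposal follows essentially the same route as the paper: the paper's proof is an induction on the rank of deductions using the reduction steps, combined with restricting (EFQ) to conclusions that are propositional variables or their negations (as in the displayed examples) and with Lemma~\ref{Lemma1} on (EM)-finality, all following \cite[Theorems 7--10]{KurbisPetrukhin}. Your write-up simply spells out in more detail the interaction cases that the paper delegates to that reference.
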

\begin{proof}
	By induction on the rank of deductions. Using the reduction steps and Lemma \ref{Lemma1}. Similarly to \cite[Theorems 7--10]{KurbisPetrukhin}.
\end{proof}	
\section{Conclusion}\label{6}
In this paper, we proved normalisation for infectious logics $ \Sfde $, $ \dSfde $, $ \bf K_3^w$, and $ \bf PWK$ as well as their non-infectious modifications, including two new logics, $ \dSrfde $ and $ \dSlfde $. Notice that all these logics are one way or another connected with Kleene's concept of regular logics. For example, $ \Sfde $ and $ \dSfde $ may be considered as four-valued versions of $ \bf K_3^w$ and \textbf{PWK}. Hence, a reasonable topic for further research is an investigation of normalisation for other Kleene-style logics, e.g. those which were formalised via ND systems in \cite{Petrukhin4}. Of course, we do not need to limit a future research to Kleene-style logics, one may try to prove normalisation for other infectious logics, e.g. for five- and six-valued versions of $ \Sfde $ and $ \dSfde $ studied in \cite{CiuniSzmucFerguson} or for logics treated in \cite{Szmuc,Szmuc21,CiuniSzmucFerguson19}.

\paragraph{Acknowledgments.} I would like to thank anonymous referees for helpful comments on the earlier version of this paper. The research presented in this paper is supported by the grant from the National Science Centre, Poland, grant number DEC-2017/25/B/HS1/01268.

\nocite{*}
\bibliographystyle{eptcs}
\bibliography{biblio}
\end{document}